\numberwithin{equation}{section}
\theoremstyle{plain}
\newtheorem{lemma}[equation]{Lemma}
\newtheorem{proposition}[equation]{Proposition}
\newtheorem{proposition-definition}[equation]{Proposition-Definition}
\theoremstyle{definition}
\newtheorem{remark}[equation]{Remark}
\newcommand{\CC}{{\mathbb{C}}}
\newcommand{\RR}{{\mathbb{R}}}
\newcommand{\calA}{{\mathcal{A}}}
\newcommand{\calC}{{\mathcal{C}}}
\newcommand{\calJ}{{\mathcal{J}}}
\newcommand{\calN}{{\mathcal{N}}}
\newcommand{\one}{{\mathbf{1}}}
\newcommand{\rd}{{\mathrm{d}}}
\newcommand{\SWAP}{{\mathrm{SWAP}}}
\newcommand{\half}{{\tfrac 1 2}}
\newcommand{\U}{{\mathsf{U}}}
\newcommand{\proj}[1]{\ket{#1}\!\bra{#1}}
\DeclareMathOperator{\poly}{poly}
\DeclareMathOperator{\polylog}{polylog}
\newcommand{\ii}{{\mathbf{i}}}
\DeclareMathOperator{\tr}{{{Tr}}}
\DeclareMathOperator{\Tr}{{Tr}}
\DeclareMathOperator*{\diag}{{diag}}
\DeclareMathOperator{\rank}{{rank}}
\DeclareMathOperator{\avg}{{\mathbb{E}}}
\DeclarePairedDelimiter{\norm}{\lVert}{\rVert}
\DeclarePairedDelimiter{\abs}{|}{|}
\begin{document}

\title{Growth and collapse of subsystem complexity under random unitary circuits}

\author{Jeongwan Haah}
\affiliation{Leinweber Institute for Theoretical Physics, Stanford University, California, USA}
\affiliation{Google Quantum AI}
\orcid{0000-0002-1087-6853}
\author{Douglas Stanford}
\affiliation{Leinweber Institute for Theoretical Physics, Stanford University, California, USA}

\begin{abstract}
    For chaotic quantum dynamics modeled by random unitary circuits,
    we study the complexity of reduced density matrices of subsystems
    as a function of evolution time where the initial global state is a product pure state.
    The state complexity is defined as the minimum number of local quantum channels
    to generate a given state from a product state to a good approximation.
    In $1+1$d, we prove that 
    the complexity of subsystems of length~$\ell$ smaller than half 
    grows linearly in time~$T$ at least up to~$T = \ell / 4$ 
    but becomes zero after time~$T = \ell /2$ 
    in the limit of a large local dimension,
    while the complexity of the complementary subsystem of length larger than half
    grows linearly in time up to exponentially late times.
    Using holographic correspondence,
    we give some evidence that the state complexity of the smaller subsystem
    should actually grow linearly up to time $T = \ell/2$ and then abruptly decay to zero.
\end{abstract}

\maketitle
\tableofcontents

\section{Introduction}

The complexity of a pure quantum state $\mathfrak{C}(|\psi\rangle)$ is defined as the minimum number of simple unitary operations needed to approximately prepare it, starting from a product state.%
\footnote{There is more than one reasonable definition of the state complexity. See \cref{rem:complexityDef}.}
A fundamental problem that has been studied from several points of view is the following: how does the complexity of 
\begin{equation}
|\psi(T)\rangle = e^{-\mathrm{i}H T}|\psi_0\rangle, \hspace{20pt} (|\psi_0\rangle = \text{ simple initial state})
\end{equation}
depend on time? An obvious recipe to prepare $|\psi(T)\rangle$ is just to (approximately) implement the Hamiltonian evolution itself. This gives an upper bound $\mathfrak{C} \le (\text{const})\cdot T$, and the question is whether this is close to saturated. Holographic conjectures \cite{Susskind:2014rva,Brown:2017jil}, simple model systems \cite{Lin:2018cbk}, and work on random quantum circuits \cite{Brandao:2012zoj,Roberts:2016hpo,Jian:2022pvj,Oszmaniec:2022srs,Li:2022hkh,Haferkamp:2023cem,Chen2024} suggest that if the time evolution is sufficiently chaotic, no shortcut or compression is possible until late times, exponential in the entropy.

One can also define the complexity of a mixed state $\sigma$, as the minimum number $N = \mathfrak C(\sigma)$ of local quantum channels~$\calC_i$ (completely positive trace-preserving linear maps on density matrices) to create a good approximation to~$\sigma$ starting with a product state:
\begin{equation}\label{defofC}
    \frac 1 2 \norm[\Big]{ 
        \sigma - \calC_N \circ \cdots \circ \calC_2 \circ \calC_1 ( \eta_1 \otimes \cdots \otimes \eta_\ell )
    }_1
    \le \alpha.
\end{equation}
Here $\eta_j$ is the initial density matrix at qudit~$j$ 
and each~$\calC_i$ is chosen from a set of allowed elementary operations.
We allow ancillary qudits inserted and discarded in between the local channels.
These ancillas are all qudits of the same dimension as those in the system.
In a robust definition of complexity,
we should allow a constant error like $\alpha = 0.1$.

If a chaotic system is in state $|\psi(T)\rangle$, subsystems will be in mixed states. How does the complexity of one of these mixed states depend on $T$? Before describing our results, we will discuss some reasonable expectations.
\begin{itemize}
\item An upper bound is given by a preparation recipe where we first prepare the overall pure state $|\psi(T)\rangle$ and then trace out the complementary subsystem. For local systems, this would give $\mathfrak{C} \le (\text{const.})LT$ where $L$ is the size of the total system. 

\item
For short times, one expects that for any subsystem, the complexity will grow at least somewhat, although perhaps slower than $\propto LT$ because it may be unnecessary to implement the time evolution fully on the complementary subsystem. 

\item
For late times, it seems important to distinguish between subsystems that are smaller than half vs.~bigger than half. 
\begin{itemize}
\item Subsystems that are smaller than half are expected to thermalize, so their late-time density matrix approaches a thermal density matrix, with low complexity (zero, at infinite temperature). 
\item If the subsystem is larger than half, then its density matrix will continue to evolve \cite{Tang:2024puq}, and one might expect the complexity to grow for a long time.
\end{itemize}
\end{itemize}
An interesting question is how the short time and long time behaviors are connected together. For example, how does the complexity of the smaller system approach its small late-time value? One could imagine either a continuous or (approximately) discontinuous evolution
as in \Cref{fig:functionCt}.
\begin{figure}[b]
\begin{equation*}
\begin{tikzpicture}[scale=2, rotate=0, baseline={([yshift=-0.15cm]current bounding box.center)},line cap=round, line join=round, thick, >=latex]

  \draw[->, gray] (-0.1,0) -- (2.1,0) node[below right] {$T$};
  \draw[->, gray] (0,-0.1) -- (0,1.2) node[left] {$\mathfrak{C}$};
\draw[black, very thick] (1,0) -- (2,0);
  \draw[domain=0:1, smooth, variable=\x, very thick] plot ({\x}, {2*\x*(1-\x)});
\end{tikzpicture}
\hspace{20pt}
\begin{tikzpicture}[scale=2, rotate=0, baseline={([yshift=-0.15cm]current bounding box.center)},line cap=round, line join=round, thick, >=latex]

  \draw[->, gray] (-0.1,0) -- (2.1,0) node[below right] {$T$};
  \draw[->, gray] (0,-0.1) -- (0,1.2) node[left] {$\mathfrak{C}$};

  \draw[black, very thick] (0,0) -- (1,1) -- (1,0) -- (2,0);  
\end{tikzpicture}
\end{equation*}
\caption{Two possible behaviors of the complexity~$\mathfrak C$ of a smaller subsystem as a function of evolution time~$T$ }
\label{fig:functionCt}
\end{figure}
Conjectures for this function $\mathfrak{C}(T)$ can be obtained by studying holographic proposals for the complexity of subsystems. According to these proposals, either qualitative scenario can happen \cite{Chen:2018mcc} (corresponding to continuous vs discontinuous saturation scenarios of \cite{Liu:2013iza}), but for the particular case of a $1+1$-dimensional system starting in a state with short-ranged entanglement, the holographic prediction looks like the figure on the right.

{\bf In this paper}, we will try to put the expectations listed above on a firmer footing, working in the setting of random brickwork quantum circuits, with $1+1$ dimensional architecture. In particular, using rigorous methods we establish
\begin{itemize}
    \item In \cref{sec:ComplexityGrowth}: the complexity of a subsystem bigger than half has a $T$-linear lower bound until exponential times.
    \item In \cref{secthree}: systems smaller than half thermalize to infinite temperature and therefore have small late time complexity.
    \item In \cref{secmutual}: the behavior of mutual information implies that the complexity of a smaller subsystem has a lower bound of the rough form in the left panel of \Cref{fig:functionCt}.
\end{itemize}
These results do not distinguish between the scenarios in \Cref{fig:functionCt}.
So in the remainder of the paper we give some non-rigorous evidence in favor of the discontinuous one. In particular
\begin{itemize}
    \item In \cref{secholographic} we review holographic predictions for complexity and use entanglement wedge reconstruction to give a perspective on the sharp transition.
    \item In \cref{secreplica} we imitate the entanglement wedge reconstruction discussion for a variant of the random quantum circuit. We show that up until right before thermalization, the small subsystem remembers all of the gates that were applied to it in the past.
\end{itemize}
As a separate question (motivated by the memory capacity of the almost-thermalized small subsystem), we study in \cref{counting} the problem of how many approximately pairwise orthogonal density matrices can exist, as a function of their rank and dimension.

\subsection*{Brickwork random circuit}
Let's now introduce the brickwork circuit that we will study in most of the paper. We consider random circuit dynamics on systems of $q$-dimensional qudits.
Initially the system is in a product state,
and at every time step we apply a layer of two-site gate
chosen uniformly at random from~$\U(q^2)$ in a brickwork fashion.
If the system is a circle containing $L$ sites,
then our brickwork circuit requires that $L$ be even.
Let there be $T$ layers of random gates in total.
We will consider intervals,
and to avoid undesired triviality of the last time step,
we assume that both boundary points of our interval are straddled by a gate at time~$T$.
This implies that the length~$\ell$ of our interval is always even.

\begin{remark}\label{rem:complexityDef}
In computational complexity,
there is no universal definition for ``elementary operations.''
This ambiguity continues to exist in our study:
we choose to allow elementary quantum channels to use ancillas,
whereas other considerations may not;
we take arbitrary two-qudit operations as elementary,
while in some other cases some specific choices are preferred;
we allow a constant error measured in trace distance in the approximation,
while others may require smaller (more stringent) error.

Nevertheless, we can often meaningfully compare 
the complexity of functions or distributions without those details
because they only introduce minor overall factors.
This is well established for different generating sets for a fixed unitary group
by the Solovay--Kitaev theorem,
though the ambiguity with ancillas appears to be more involved.
We do not address these differences in this paper,
but we note two important choices we make.

One is locality.
Our elementary quantum channels always act on two \emph{neighboring} qudits.
This is physically well motivated, 
and contrasts to a more common setting in computational complexity literature
where any two-qudit gate is deemed elementary,
no matter how far the two qudits are.

The other is related to our interest in which the local dimension $q$ is large.
For large~$q$, a typical Haar-random $\U(q^2)$ gate 
would have high complexity from the perspective of elementary qubit operations. 
One could try to replace these random unitaries 
with elements of a unitary design with smaller complexity. 
However, we will simply declare that any $\U(q^2)$ operation acting on neighboring qudits
is an allowed~$\calC_i$ and has complexity one by definition.
\end{remark}

{\bf Note:} we became aware of related work~\cite{Fan2025} and we are coordinating the submission of our papers.

\section{Complexity growth of a larger region}\label{sec:ComplexityGrowth}

Let $\sigma$ be a density operator obtained by tracing out a subsystem of $L - \ell$ qudits 
from a pure state on~$L$ qudits.
So, $\sigma$ is supported on $\ell$ qudits.
We consider the probability that 
the random circuit produces two states~$\sigma$ there are close to each other.
The case where $\ell = L$ was considered in~\cite{Chen2024}.

We find it convenient to use fidelity 
$F(\sigma,\sigma') = \Tr\sqrt{\sqrt{\sigma}\sigma' \sqrt{\sigma}} = \norm{\sqrt{\sigma}\sqrt{\sigma'}}_1$
as distance measure where $\norm{A}_1 = \Tr\sqrt{A^\dagger A}$.
Fidelity,\footnote{Some authors define fidelity as the square of ours.}
trace distance $T(\sigma,\sigma') = \half \norm{\sigma - \sigma'}_1 
= \half \Tr \sqrt{(\sigma - \sigma')^2}$,
and Hilbert--Schmidt distance $E(\sigma,\sigma') = \half \norm{\sigma - \sigma'}_2 = \half \sqrt{\Tr((\sigma - \sigma')^2)} $ are related by
the Fuchs--van de Graaf and Schatten norm inequalities
\begin{align}
    &F^2 + T^2 \le 1 \le T + F \, ,\\
    &E \le T \le 2 E \sqrt{\min(\rank \sigma, \rank \sigma')}\, .\nonumber
\end{align}

Our probability estimate will use a two-step argument.
The first step is that random circuits generate 
an $\epsilon$-approximate unitary $k$-design~$\nu$ 
(a probability distribution on $\U(q^L)$)
in depth linear in~$k$.
The defining inequality\footnote{The usual definition of relative error approximate unitary designs 
    has the $\nu$-average sandwiched by the Haar average up to factors~$1 \pm \epsilon$,
    but we need one direction as shown.
} 
for the approximate unitary design~$\nu$ is as follows:
\begin{align}
  \avg_{U \sim \nu} \Tr(M U^{\otimes k} \Sigma U^{\dagger \otimes k} ) 
  \le (1+\epsilon)
  \avg_{V \sim \U(q^L)} \Tr(M V^{\otimes k} \Sigma V^{\dagger \otimes k} ) 
  \label{eq:approxDesignEps}
\end{align}
for all \emph{positive} semidefinite operators~$M$ and $\Sigma$ 
acting on $q^{k L}$-dimensional Hilbert space,
where $V$ is Haar random.
The random brickwork circuit~$U$ of depth~$T$ is an $\epsilon$-approximate unitary $k$-design 
with $\epsilon < 1$ and $k = T / \poly(L)$ if $T \le q^{c L}$ for some constant~$c > 0$~\cite{Chen2024}.%
\footnote{Strictly speaking, to directly quote the result of~\cite{Chen2024}, 
    we need to assume that $q$ is a power of~$2$.
    If $q$ is a power of~$2$, our brickwork dynamics 
    using $q$-dimensional qudits approaches 
    the Haar random one only faster than the $q=2$ case,
    which can be seen by padding a depth-one circuit with $\U(q^2)$-gates before and after
    a depth-2 circuit with $\U(4)$-gates.
}

The second step converts~\eqref{eq:approxDesignEps} into the following probability estimate,
derived below.
Namely, for \emph{any} density matrix~$\rho$ on the same subsystem as~$\sigma$, 
we have
\begin{align}
  \Pr_\sigma \big[ F(\rho,\sigma) \ge 1 - 2\alpha \big] \le  \frac{\avg_\sigma F(\rho,\sigma)^{2k}}{(1-2\alpha)^{2k}} \le \frac{(1+\epsilon) }{(1-2\alpha)^{2k} q^{k(2\ell - L)}} \min(e^{k^2 q^{-(L-\ell)}},~ k!) \, .
  \label{eq:ProbLargeOverlapInLargerInterval}
\end{align}
This second step does not use any detail of the dynamics but only~\eqref{eq:approxDesignEps}.
Setting $k = T / \poly(L)$ and $\alpha = 0.1$, and assuming~$\ell > L /2$,
we conclude that the probability of interest is at most
$q^{ - T / \poly(L)}$
as long as $T \le \min(\poly(L) \max(q^{2\ell - L}, q^{(L-\ell)/2}),\, q^{cL})$.
Note that $\rho$ is completely arbitrary.

Now, let $\mathcal S$ be the set of all states that are expressible by $G$ gates.
Suppose we have a set~$\{\rho_1, \rho_2,\ldots, \rho_N \} \subset \mathcal S$
such that for every $\rho \in \mathcal S$ there exists $\rho_i$ 
with $\frac 1 2 \norm{\rho - \rho_i}_1 \le \alpha = 0.1$.
Such a set is called an $\alpha$-covering net for~$\mathcal S$.
If $\sigma$ is $\alpha$-close in trace distance to some element of~$\mathcal S$,
then $\sigma$ is $2\alpha$-close in trace distance to some~$\rho_i$.
Hence, the probability that our reduced density matrix~$\sigma$ generated by the random circuit
has complexity~$G$ is at most $N q^{-T / \poly(L)}$.
We want this probability to be so small that the complexity of~$\sigma$ is typically large.
Our interest is thus an upper bound on~$N$.

If we perturb each gate in a circuit of $G$ gates 
by $\frac \alpha {G}$ in the channel metric (diamond norm),
then the final state will differ in trace distance by at most~$\alpha$.
This means that for the construction of~$\{\rho_1,\ldots,\rho_N\}$, 
it is extravagant to consider two gates at a location that differ 
by~$\frac \alpha {2G}$ or smaller in the channel metric.
It is not difficult to construct a $\frac \alpha {G}$-covering net of 
cardinality~$(G/\alpha)^{\poly(q)}$
for the space of all two-qudit channels, 
where the exponent~$\poly(q)$ is the dimension.%
\footnote{This covering net for the space of local channels is not needed 
    if we declare that only $\poly(q)$ choices of two-qudit channels are elementary,
    in which case $N \le \poly(q,L)^G$.
}
We conclude that $N \le \poly(q,L, (G/\alpha)^{\poly(q)})^G$.

It follows that most reduced density matrices~$\sigma$ produced by the random circuit
requires $G$ local channels where
\begin{equation}
    G \ge T/\poly(q, L,\log T) \ge T / \poly(qL) \quad \text{ if } \quad T \le q^{c' L} 
\end{equation}
for some universal constant~$c' > 0$,
where the second inequality is because $\log T \le c' L \log q$.
This shows the linear-$T$ growth of the complexity of most instances~$\sigma$
until exponentially late time.

In short, the argument is that the random circuits produce $N \ge q^{T / \poly(q L)}$ 
nearly orthogonal states on the subsystem of size~$\ell$ 
so we need $\log(N) \ge T / \poly(q L)$ gates to express them.
Note that the result here,
especially \eqref{eq:ProbLargeOverlapInLargerInterval}, 
is meaningful only for any subsystem of size~$\ell$ such that
\begin{equation}
    \ell > L / 2 \, , \label{eq:factorOfTwo}
\end{equation}
but otherwise the subsystem is arbitrary;
the subsystem does not have to be contiguous.

\begin{remark}
    The factor of~$2$ in the condition \eqref{eq:factorOfTwo} appears in the proof
    at a seemingly minor step where we bound the rank of~$\sigma$ by~$q^{L-\ell}$.
    The condition~$\ell > L /2$ is just to ensure that $q^{L-\ell} < q^{\ell}$.
    Indeed, if the rank of~$\sigma$ is much smaller than~$q^\ell$ (the maximum possible),
    still the argument goes through and the linear growth of subsystem complexity holds.
    This leads to very different phenomenology of complexity 
    for other random circuits,
    for example, the ``shallow'' unitary designs~\cite{LaRacuente2024,Schuster2024,Cui2025},
    which we explain in~\cref{app:shallowDesigns}.
\end{remark}

\subsection*{Proof of \cref{eq:ProbLargeOverlapInLargerInterval}}
The first inequality in~\cref{eq:ProbLargeOverlapInLargerInterval} is Markov's inequality.
The second inequality is proved as follows.
The fidelity between $\rho$ and $\sigma$ has its $k$-th moment bounded as
\begin{align}
  F(\rho,\sigma)^{2} 
  &= 
  \norm{\sqrt \rho \sqrt \sigma}_1^2 \nonumber\\
  &\le 
  r \norm{\sqrt \rho \sqrt \sigma}_2^2 & (r = \rank \sigma) \label{eq:FtoTwonorm}\\
  &=
  r \Tr(\rho \sigma) \nonumber \\
  &=
  r \bra \psi (\underbrace{\rho \otimes \one_{r'}}_{\rho' \succeq 0}) \ket \psi & (r' = q^{L-\ell}, \text{$\psi$ purifies $\sigma$}) \nonumber
\end{align}
\begin{align}
  \avg_\sigma F(\rho,\sigma)^{2k} 
  &\le 
  r^k \avg_\psi \bra{\psi} \rho' \ket{\psi}^k & \text{(by \eqref{eq:FtoTwonorm})}\nonumber\\
  & =
  r^k \avg_U \bra{0^{\otimes k}} U^{\dagger \otimes k} (\rho')^{\otimes k} U^{\otimes k} \ket{0^{\otimes k}} 
    & (\ket \psi = U \ket 0 )\nonumber\\
  &\le
  r^k (1+\epsilon) \avg_V \bra{0^{\otimes k}} V^{\dagger \otimes k} (\rho')^{\otimes k} V^{\otimes k} \ket{0^{\otimes k}} 
    & (\text{\eqref{eq:approxDesignEps} with }(\rho')^{\otimes k} \succeq 0)\nonumber\\
  &=
  r^k (1+\epsilon) \avg_V (\Tr V \rho' V^\dagger Q )^k & (Q = \proj{0})\nonumber\\
  &\le
  r^k (1+\epsilon) \avg_V \avg_i (\Tr V P_i V^\dagger Q )^k & (\rho' = \avg_i P_i,~\rank P_i = r')\nonumber\\
  &\le
  (1+\epsilon) \left(\frac{r r'}{d}\right)^k \,\, \prod_{j=0}^{k-1}\left(1 + \frac{j}{r'}\right) & (d = q^L)
       \, \label{eq:AvgFidelityKBound}
\end{align}
where the second to the last inequality uses the convexity of $\phi : y \mapsto y^k$,
and the last inequality uses~\cref{thm:randomProjectorOverlap} below.
We finally use $r \le q^{L-\ell} = r'$  to arrive at the second inequality 
of~\cref{eq:ProbLargeOverlapInLargerInterval}.

\begin{lemma}\label{thm:randomProjectorOverlap}
  Let $A$ and $B$ be orthogonal projectors of rank~$a$ and~$b$ on~$\CC^d$, respectively.
  Let $U \sim \U(d)$ be a Haar random unitary. 
  For any $i$, let $x_i \sim \mathcal N(0,\half)$ be an independent real Gaussian random variable.
  Then, for any convex function $\phi: \RR_{\ge 0} \to \RR$, we have
  \begin{align}
    \avg_{U} \phi\left[ \frac{d}{ab}\Tr( UPU^\dagger Q ) \right] 
    &\le
    \avg_{x_i} \phi\left[ \frac{1}{ab} \sum_{i=1}^{2ab} x_i^2 \right] \, ,\\
    \avg_{x_i} \left(\frac{1}{m} \sum_{i=1}^{2m} x_i^2 \right)^k 
    &= 
    \prod_{j=0}^{k-1}\left(1 + \frac j m \right)
    \, .
  \end{align}
\end{lemma}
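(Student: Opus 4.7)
The plan is to realize the Haar average as an average over a complex Gaussian matrix via the polar decomposition, and then deduce the inequality from Jensen's inequality. Choose a basis in which $P=\diag(I_p,0)$; then $UPU^\dagger = VV^\dagger$, where $V$ is the $d\times p$ isometry formed by the first $p$ columns of $U$, and $\Tr(UPU^\dagger Q)=\norm{QV}_F^2$. Introduce an independent $d\times p$ matrix $G$ of i.i.d.\ $\mathcal{CN}(0,1)$ entries and write its polar decomposition $G=\tilde V R$ with $\tilde V$ a $d\times p$ isometry and $R=(G^\dagger G)^{1/2}$. Because the law of $G$ is invariant under left multiplication by $\U(d)$, the factor $\tilde V$ is uniform on $d\times p$ isometries (same law as $V$) and is independent of $W:=R^2=G^\dagger G$. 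This independence is the one non-obvious input.

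Given this factorization, the inequality follows in a few lines. One computes
\begin{equation*}
\norm{QG}_F^2 = \Tr(G^\dagger Q G) = \Tr\bigl(\tilde V^\dagger Q\tilde V \cdot W\bigr),
\end{equation*}
and a direct moment calculation yields $\mathbb{E}[W]=d\, I_p$, since each diagonal entry of $W$ is a sum of $d$ i.i.d.\ unit-mean $|\mathcal{CN}(0,1)|^2$ random variables while off-diagonal entries vanish in expectation. Therefore $\mathbb{E}[\norm{QG}_F^2\mid\tilde V] = d\,\norm{Q\tilde V}_F^2$, and applying Jensen's inequality to the convex function $\phi$ gives
\begin{equation*}
\phi\!\left(\frac{d}{pq}\norm{Q\tilde V}_F^2\right) \le \mathbb{E}\!\left[\phi\!\left(\frac{1}{pq}\norm{QG}_F^2\right)\,\Big|\,\tilde V\right].
\end{equation*}
Averaging over $\tilde V$ and noting that, after a basis rotation adapted to $Q$, the matrix $QG$ consists of a $q\times p$ block of i.i.d.\ $\mathcal{CN}(0,1)$ entries, one identifies $\norm{QG}_F^2 \stackrel{d}{=} \sum_{i=1}^{2pq} x_i^2$ with $x_i\sim\mathcal{N}(0,1/2)$, since $|z|^2=x^2+y^2$ for independent $x,y\sim\mathcal{N}(0,1/2)$. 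This delivers the first inequality.

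The second identity is a textbook moment computation: setting $S_m:=\sum_{i=1}^{2m}x_i^2$, one has $2S_m\sim\chi^2_{2m}$, equivalently $S_m$ is Gamma-distributed with shape $m$ and rate $1$, so $\mathbb{E}[S_m^k]=\Gamma(m+k)/\Gamma(m)=m(m+1)\cdots(m+k-1)$; dividing by $m^k$ produces $\prod_{j=0}^{k-1}(1+j/m)$. The only delicate step in the whole argument is the independence of $\tilde V$ and $R$ in the polar decomposition; it is standard but worth stating explicitly, because that independence is what allows Jensen's inequality to produce precisely the factor $d$ relating the Haar-averaged overlap to the chi-squared variable.
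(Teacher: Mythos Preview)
Your argument is correct and is essentially the same as the paper's. Both realize the Haar-random rank-$p$ projector via a $d\times p$ complex Gaussian matrix $G$, use left-unitary invariance to split $G$ into an independent ``angular'' part (your isometry $\tilde V$; the paper's Haar unitary $U$ in $\tilde\rho = U\tilde\eta U^\dagger$) and a ``radial'' part (your $W=G^\dagger G$; the paper's diagonal eigenvalue matrix $\tilde\eta$), and then apply Jensen's inequality after averaging the radial part to $d\cdot\text{(identity/projector)}$. The only cosmetic difference is that you use the polar decomposition and compute $\mathbb{E}[W]=dI_p$ directly, while the paper uses the spectral decomposition of $GG^\dagger$ and a permutation-symmetry argument to get $\mathbb{E}[\tilde\eta]=dP$; the moment identity is likewise the same computation in Gamma-distribution versus direct-Gaussian-integral language.
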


\noindent
This is a minor generalization of~\cite[Lemma~III.5]{Hayden2004},
in which $x \mapsto \exp(\xi x)$ was used as a convex function with an optimized $\xi \in \RR$.
A self-contained short proof of~\cref{thm:randomProjectorOverlap} 
is presented in~\cref{app:randomProjLemma}.

\section{Late-time behavior of a smaller interval}\label{secthree}

We have shown that 
the complexity of a larger-than-half interval grows linearly in time up to an exponentially long time.
In contrast, an interval of length~$\ell < \frac L 2 ( 1 - \delta)$ eventually thermalizes,
where $\delta \propto \frac{\log 2}{\log q}$.
More precisely, we have the following, the proof of which is found in~\cref{app:Purity}.

\begin{proposition}\label{thm:InfinitePurityBound}
    Let $P(T;\ell)$ be the ensemble-averaged purity of a finite interval of length~$\ell$ at time~$T$
    under the brickwork random circuit dynamics starting with a product pure state on an infinite chain.
    For any~$q \ge 1$, integer~$T \ge 0$, and even integer~$\ell \ge 2$, we have
  \begin{equation}
    \frac{1-q^{-\ell}}{T+1}\binom{2T}{T}\left(\frac{q}{q^2 + 1} \right)^{2T} 
    \le 
    \left(
      P(T;\ell) -\frac{1}{q^{\ell}}
    \right) 
    \le 
    \left(\frac{2q}{q^2 + 1} \right)^{2T}\, .
  \end{equation}
\end{proposition}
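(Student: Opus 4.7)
The plan is to apply the standard replica-trick mapping of the Haar-averaged purity onto a classical partition function on the brickwork spacetime lattice. Writing $P(T;\ell) = \avg_U \Tr[(\rho \otimes \rho)\,\SWAP_A]$ and integrating each independent two-qudit gate $U \sim \U(q^2)$ via the Weingarten formula for $\avg\, U^{\otimes 2} \otimes \bar U^{\otimes 2}$, each gate projects its two qudits onto the two-dimensional $S_2$-invariant subspace spanned by the replica identity and swap states $|I\rangle\rangle$ and $|S\rangle\rangle$. The averaged purity then becomes the partition sum of an Ising-like model with a spin $\sigma \in \{I, S\}$ per qudit per time slice; the top boundary is fixed by $\SWAP_A$ to be $S$ on $A$ and $I$ on its complement, while the pure product initial state contributes equal overlaps with $|I\rangle\rangle$ and $|S\rangle\rangle$ on every bottom site.

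Because each Haar-averaged gate forces its two output spins to agree, every admissible configuration is encoded by its domain walls separating $I$- and $S$-regions. The top boundary produces exactly two such walls, emerging from the endpoints of $A$, and in the brickwork geometry each wall is forced to step one site left or right per time layer. A direct Weingarten computation attaches weight $q/(q^2+1)$ to each gate crossed by a single wall and weight $1$ to all other gates. Pairs of walls may annihilate in the bulk; configurations in which they do not annihilate reach the bottom row and, once the initial-state overlaps are summed, contribute precisely the thermal value $q^{-\ell}$ to $P(T;\ell)$. Therefore $P(T;\ell) - q^{-\ell}$ equals the weighted sum over pairs of walls that annihilate somewhere in the bulk.

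For the upper bound I would relax both the non-crossing and the annihilation constraints: the weighted sum then factorizes into two independent walks of length at most $T$, each with two direction choices per step weighted by $q/(q^2+1)$, yielding $(2q/(q^2+1))^{2T}$. For the lower bound I would restrict to configurations in which the relative coordinate of the two walls traces a Dyck path of length $2T$ first returning to zero at the bottom, enumerated by the Catalan number $C_T = \tfrac{1}{T+1}\binom{2T}{T}$; each such configuration carries weight $(q/(q^2+1))^{2T}$, and the factor $1 - q^{-2}$ records the initial-state Gram normalization together with the Weingarten prefactor at the annihilation event. Both bounds require $\ell$ large enough that the walls do not wrap around, which is guaranteed because the chain is infinite.

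I expect the main obstacle to be the careful bookkeeping of Weingarten prefactors and Gram-matrix normalizations at the initial and final boundaries, both to isolate the thermal $q^{-\ell}$ piece cleanly and to confirm the per-step weight $q/(q^2+1)$ with the correct overall constants. Once this bookkeeping is settled, the upper bound reduces to counting unrestricted binary walks and the lower bound to counting Dyck paths, both entirely classical and giving the stated expressions.
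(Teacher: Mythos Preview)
Your mapping to the domain-wall model and the per-step weight $\eta = q/(q^2+1)$ are correct, but the decomposition you propose is backwards. Configurations in which the two walls do \emph{not} annihilate by time $T$ all have domain-wall length exactly $2T$ and hence total weight $J(T;\ell)\,\eta^{2T}$ with $J(T;\ell)\le 4^T$; this equals $1$ at $T=0$ and decays to zero, so it is not $q^{-\ell}$. The thermal value $q^{-\ell}$ is instead the $T\to\infty$ limit of the \emph{annihilating} contribution: a pair merging after $\tau$ steps carries weight $\eta^{2\tau}$, and the nontrivial identity one actually needs (proved in the paper as a separate lemma via a generating-function argument) is that the full sum over all $\tau\ge 0$ of these merging weights equals $q^{-\ell}$ exactly. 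Thus $P(T;\ell)-q^{-\ell}$ is a \emph{difference}: $J(T;\ell)\eta^{2T}$ minus the tail $\tau\ge T$ of that merging series. Your upper-bound argument (relaxing constraints on annihilating paths) therefore cannot give $(2\eta)^{2T}$, since each annihilating term has weight $\eta^{2\tau}\ge\eta^{2T}$ and their sum does not decay in $T$; the correct upper bound simply drops the nonnegative tail to get $P-q^{-\ell}\le J(T;\ell)\eta^{2T}\le (2\eta)^{2T}$.

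For the lower bound one must control the subtracted tail, and the paper does so by reapplying the same identity: walls at separation $r$ at time $T$, summed over all future merging times, contribute $\eta^{2T}q^{-r}$, so that $P-q^{-\ell}=\sum_{r\ge 2}J(r;T;\ell)(1-q^{-r})\eta^{2T}$. Keeping the single term $r=\ell$, using $(1-q^{-\ell})\ge(1-q^{-2})$, and counting via reflection $J(\ell;T;\ell)=\binom{2T}{T}-\binom{2T}{T+\ell}\ge \frac{1}{T+1}\binom{2T}{T}$ gives the stated bound. So the factor $1-q^{-2}$ is not a Weingarten residue at the annihilation vertex but the inequality $(1-q^{-r})\ge(1-q^{-2})$ for even $r\ge 2$; and the Catalan number arises from non-crossing pairs returning to their \emph{initial} separation $\ell$, not from Dyck paths of the relative coordinate hitting zero (that coordinate starts at $\ell$, not at $0$).
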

\noindent 
Note that $\log \binom{2T}{T} = 2T \log 2 + \cdots$,
so the bounds are quite tight.

It follows from \cref{thm:InfinitePurityBound} that
if $\rho(U)$ is the reduced density operator obtained by tracing out 
all but $\ell$ consecutive qubits 
after applying a random unitary circuit~$U$ of depth~$T$ on a product state,
then
for any $q \ge 2$ and any $\alpha > 0$ we have
\begin{equation}
    \avg_U \norm*{\rho(U) - \frac{\one_{q^\ell}}{q^\ell} }_1 
    \le q^{\ell/2}\avg_U \norm*{\rho(U) - \frac{\one_{q^\ell}}{q^\ell} }_2 
    \le 
    \sqrt{q^\ell \avg_U P(T;\ell,U) - 1} < \alpha
\end{equation} 
if
\begin{equation}
    T > \frac \ell 2 \left(1 + 3\frac{\log 2}{\log q} \right) + \frac{3\log \tfrac 1 \alpha}{\log q} \, .
\end{equation}

To use this result on a circle of circumference~$L$,
we can examine an interval of length~$\ell < \frac{L}{2}( 1 - 6 \frac{\log 2}{\log q})$.
The constant~$6$ is unimportant.
For such an interval smaller than half, the reduced density matrix at time
\begin{equation}
    T_1 = \frac \ell 2 \left( 1 + 6 \frac{\log 2}{\log q} \right)
\end{equation}
does not depend on the boundary condition of the chain 
because the lightcone of the interval does not cover the whole circle,
so we can use the infinite-line result of~\cref{thm:InfinitePurityBound}.
This means that, with high probability, 
the state on a smaller-than-half interval on the circle 
is close to the maximally mixed state
and therefore has zero complexity.

Since our requirement on the length~$\ell$ of the subsystem
is that $\ell < \frac L 2 ( 1- \delta)$,
there remains a small uncharted regime 
for intervals whose length is very close to but less than~$L/2$.

\section{Complexity growth as seen by mutual information}\label{secmutual}

In short time regimes, say $T \lesssim \ell  /2$, 
the estimates for unitary design quality does not imply anything meaningful.%
\footnote{
    There is room for better analysis of 
    brickwork random unitary circuit,
    and one might be able to derive a complexity bound similar to ours in this section
    using a counting argument.
    Currently, the 1d brickwork random unitary circuit~$U$ 
    is known to become an $\epsilon$-approximate unitary $k$-design 
    at depth $T  = \tilde O( k L \log \frac 1 \epsilon)$~\cite{Chen2024}
    However, in view of~\cite{Schuster2024},
    it is conceivable that $U$ is already an $\epsilon$-approximate unitary $k$-design
    at depth $T = O(k \polylog(k L / \epsilon))$.
    After all, the result of~\cite{Schuster2024} 
    says that if we omit some random gates at specific locations from~$U$,
    then it is indeed the case. See~\cref{app:shallowDesigns}.
}
However, this was not a serious issue for larger-than-half intervals
since we have shown that the typical complexity grows uniformly linearly in time 
up to $T_{\mathrm{max}} = \exp(\Theta(\ell))$.
On the other hand, for smaller-than-half intervals
we only know so far that they typically become zero complexity for $T \gtrsim \ell / 2$.

In this section,
we derive a lower bound for typical complexity in the short time regime $T < \ell / 2$
for smaller-than-half intervals
by looking at mutual information profile,
which is an argument different from the counting in~\cref{sec:ComplexityGrowth}.
Our lower bound grows in time~$T$, peaks at $T \approx \ell / 4$, 
and then decays to zero at $T \approx \ell / 2$.
Therefore, the complexity of smaller-than-half intervals grows linearly in time
at least until~$T \approx \ell/4$. 
Note that our upper and lower bounds on the complexity of smaller-than-half interval
do not match, even qualitatively, in the regime $\ell / 4 \lesssim T \lesssim \ell / 2$.

\subsection{Mutual information growth}

Consider a bipartition $AB$ of the system in a state~$\rho$ 
and a quantum channel~$\calC$ acting on~$A$.
Let $\rho' = \calC(\rho)$.
By Stinespring, the channel~$\calC$ is an isometric embedding of~$A$ into $AE$ 
($E$ being an ancilla)
followed by the trace-out of~$E$.
\begin{equation}
    \rho = \rho^{AB} 
    \xrightarrow{\quad \text{iso} \quad } (\rho')^{ABE} 
    \xrightarrow{\quad \Tr_E \quad} \rho' = (\rho')^{AB} 
\end{equation}
The strong subadditivity of von Neumann entropy implies that
the mutual information of~$\rho$ is nonincreasing:
\begin{equation}
    I_\rho(A\!:\!B) = I_{\rho'}(AE:B) \ge I_{\rho'}(A\!:\!B) \, . \label{eq:MInonincreasing}
\end{equation}

Let $\sigma$ be a density matrix of interest, and
suppose that a circuit of local quantum channels generates a bipartite state~$\rho$ 
such that $\frac 1 2 \norm{\rho - \sigma}_1  = \alpha$
where there are $m$ local channels in the circuit 
that straddle between complementary subsystems~$A$ and~$B$.
Using Stinespring, 
we regard that all local channels are implemented by some local isometry
using ancillas that are going to be traced out.
The isometry in a dilation of a channel~$\calC$ can always be chosen to
be a matrix product operator of bond dimension that is the product of
Hilbert space dimensions of the input and output.
This gives an upper bound~$q^{2m}$ on the bond dimension across~$A$ and~$B$
for~$\rho$ as a tensor network.
In turn, the bond dimension bounds the von Neumann mutual information from above
by $4m \log q$.
Since tracing out does not increase the mutual information \eqref{eq:MInonincreasing},
we have
\begin{equation}
    I_{\rho}(A\!:\!B) \le 4 m \log q \, .
\end{equation}
To account for state generation error $\alpha = \frac 1 2 \norm{\rho - \sigma}_1$, 
we use the continuity of mutual information.
The inequalities of~\cite{Alicki2003,Audenaert2006} imply that
\begin{equation}
    \abs{
        I_\rho(A\!:\!B) - I_\sigma(A\!:\!B)
    } \le 10 \,\alpha \,\log(\min(d_A,d_B)/\alpha ) \quad \text{ assuming } \quad \alpha < \tfrac 1 2 \label{eq:continuityMI}
\end{equation}
for any bipartite states~$\rho$ and~$\sigma$ with Hilbert space dimensions~$d_A$ and~$d_B$.
If $A$ and $B$ together comprise an interval of length~$\ell$,
then we have $\min(d_A, d_B) \le q^{\ell/2}$.
Therefore,
\begin{equation}
    I_\sigma(A\!:\!B) \le 4m \log q + 5 \alpha \ell \log(q / \alpha ) \, . \label{eq:MIforGateCount}
\end{equation}

\subsection{Spatial profile}

Let $\ell$ be the length of an interval~$J$ in a circle of circumference~$L$ where
\begin{equation}
    \ell < \frac L 2 ( 1 - 10 \underbrace{\log_q 2}_{\delta} ) \, .
\end{equation} 
Here, we are concerned with states~$\sigma$ at time~$T < \ell /2$ 
starting with a product state
under the random circuit.
This allows us to use the infinite chain results for the purity in~\cref{thm:InfinitePurityBound}.
The condition~$T < \ell /2$ implies that the reduced density matrix on~$J$
is unitarily equivalent to the product of two states, 
one on each of the two boundary components of~$J$.
Hence, the ensemble-average purity~$P(T;\ell)$ of the semi-infinite chain
gives a bound on the second Renyi entropy~$S_2$ 
and in turn on the von Neumann entropy~$S$:
\begin{align}
  2T \log q \ge \overline{S(\ell)} \ge \overline{S_2(\ell)} 
  \ge - \log P(T;\ell) 
  = 2 T \log \frac{q^2 +1}{2q} > 2T ( 1 - \delta ) \log q \, ,
\end{align}
where the third inequality is by the convexity of~$\RR_{>0} \ni z \mapsto -\log z \in \RR$.
In addition, if we consider a subinterval of length~$x$ contained in~$J$,
then $x \le \ell$ and we can still use \cref{thm:InfinitePurityBound}:
\begin{align}
  x \log q \ge \overline{S(x)} \ge \overline{S_2(x)} 
  &\ge - \log \left[ \frac 1 {q^{x}} + \left(\frac{2q}{q^2 + 1} \right)^{2T} \right] 
  > x (\log q) - q^{x - 2T(1-\delta)} \, .
\end{align}
Therefore, as $q \to \infty$, we know $\delta \to 0$ 
and the von Neumann entropy
converges as
\begin{equation}
  \frac{S(x)}{\log q} \xrightarrow{\quad} \min(2T,x) \, .
\end{equation}

Divide the interval~$J$ into two subintervals $A$ of length~$x$ 
and $B$ of length $\ell - x$, where $x$ is even.
The mutual information $I_\sigma(A\!:\!B) = S_{\sigma}(A) + S_{\sigma}(B) - S_{\sigma}(AB)$
with log-base~$q$ converges to
\begin{equation}
  I(x;T,\ell) = \min(2T,x) + \min(2T,\ell-x) - \min(2T,\ell) 
\end{equation}
for each value of $x,T,\ell$ in the limit $q \to \infty$.
This quantity~$I$ as a function of~$x \in [0,\ell]$ has a graph of a trapezoid.
See \cref{fig:trapezoid}.
The top of the trapezoid moves up and down as a function of~$T$
while the sides of the trapezoid remain on the enveloping triangle.
The tip of the triangle has height~$\ell/2$ and is reached at $T = \ell / 4$ and $x = \ell / 2$.
This is when the trapezoid becomes a triangle.
The area under this graph is
\begin{equation}
  \calA 
  = \frac{\ell^2}{4} - \left(2T - \frac \ell 2 \right)^2
  = 2 T (\ell - 2 T) \, .
\end{equation}

\begin{figure}
    \centering
    \includegraphics[width=0.65\textwidth, trim={0ex 45ex 80ex 0ex}, clip]{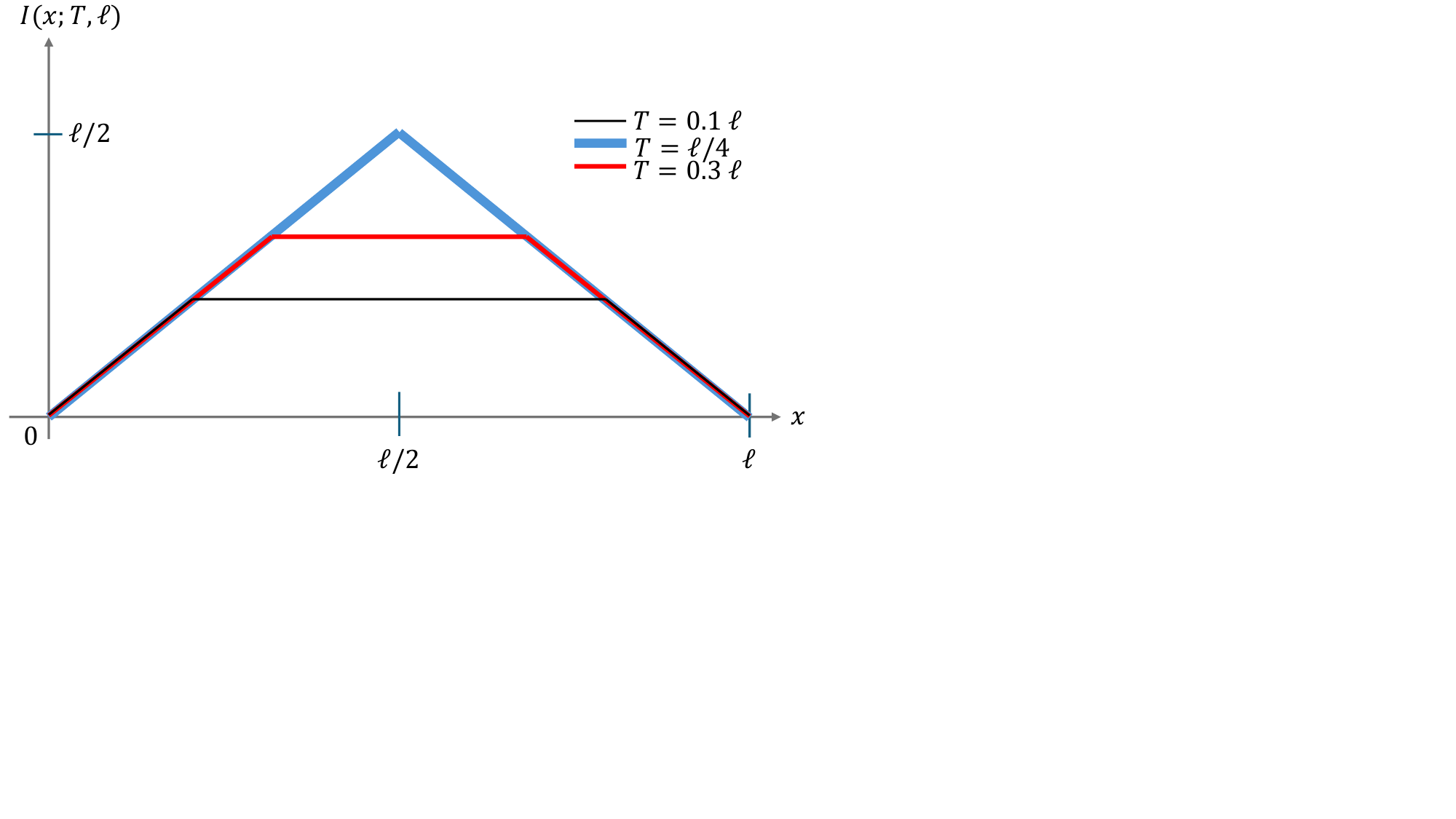}
    \caption{The graph of the mutual information profile $I(x;T,\ell)$,
    a trapezoid if~$T\neq \ell / 4$ and the full triangle if~$T = \ell/4$.
    Note that $I(x;T,\ell) = I(x;\frac 1 2 \ell - T, \ell)$.
    The area under the graph is a lower bound on the complexity.}
    \label{fig:trapezoid}
\end{figure}

\subsection{A complexity lower bound}

We now combine \cref{eq:MIforGateCount} with $I(x;T,\ell)$.
Let $\rho$ be the best approximation to~$\sigma$ 
using $\mathfrak C(\sigma)$ local quantum channels.
Along the cut at~$x$ within the interval~$J$,
we must have $m(x)$ nonidentity local quantum channels that straddle the cut,
where
\begin{equation}
    I(x;T,\ell) \le 4m(x) + 5 \alpha \ell 
\end{equation}
in the limit of large~$q$.
Summing over all even~$x$, we have
\begin{equation}
    (\ell - 2T) T \le  5 \alpha \ell^2 + 4 \sum_x m(x) \le 5 \alpha \ell^2 + 4 \mathfrak C(\sigma) \, .
\end{equation}
Here, the constant coefficients $4,5$ are unimportant and loose.
In the regime of interest where $T \propto \ell < \frac L 2 (1 - 10 \delta)$,
we conclude that the complexity~$\mathfrak C(\sigma)$ 
grows linearly with time~$T$ at least up to $T = \ell / 4$.

\section{Complexity growth according to holographic proposals}\label{secholographic}
In this section we will discuss proposed formulas for the complexity in holographic analogs of the random circuit setup. In particular, we consider a holographic 1+1 dimensional field theory on a circle of size $L$ that starts out in a high energy state with low initial entanglement: the B-state \cite{Calabrese:2005in}, which is $|\psi_0\rangle = e^{-\frac{\beta}{4}H}|B\rangle$ where $|B\rangle$ is a local product state. We take $\beta\ll L$ to be small compared to the size of the spatial circle.

The bulk description of this state is a one-sided black hole geometry in which the spacetime geometry ends on an end-of-the-world (eow) brane \cite{Hartman:2013qma}
as sketched in~\Cref{fig:bh}.
\begin{figure}[h]
    \hspace{3cm}
    \includegraphics[scale = .6, valign = c]{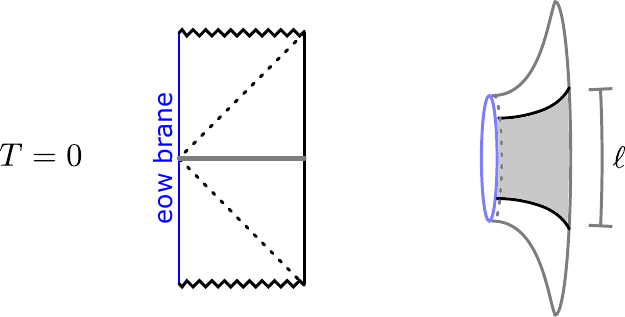}

    \hspace{3cm}
    \includegraphics[scale = .6, valign = c]{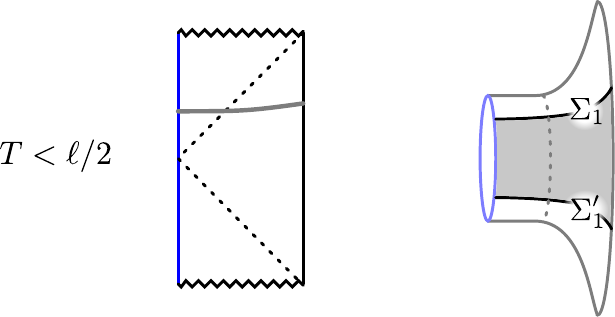}

    \hspace{3cm}
    \includegraphics[scale = .6, valign = c]{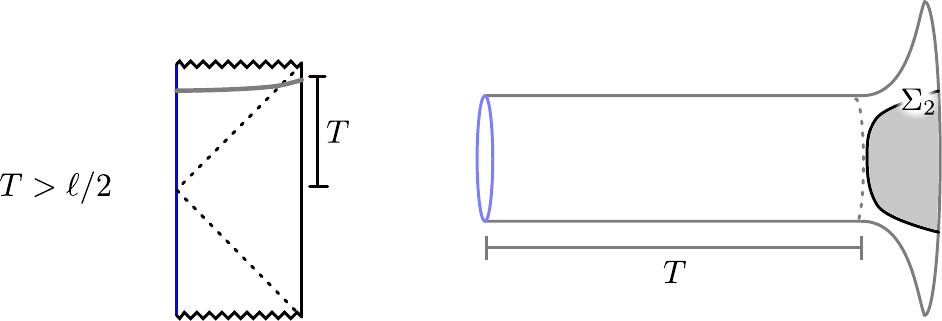}
    \caption{
        Spacetime geometry of a one-sided black hole.
        The figures on the right are sketches of the intrinsic geometry of the gray spatial slices
        at the times shown on the spacetime diagrams on the left.
        The spatial slice grows in time~$T$.
    }
    \label{fig:bh}
\end{figure}

In~\Cref{fig:bh},
we also sketched the HRT surface \cite{Hubeny:2007xt} that computes the entanglement entropy of the boundary interval of length $\ell$.\footnote{Here, we are stretching the truth a bit in the interest of clarity by drawing the surfaces as though they all live on the same spatial slice.} This surface is defined as the minimal extremal surface homologous to the boundary subregion, with the rule that the eow brane is considered homologically trivial. As a function of time, there is a transition between two possible choices of HRT surface \cite{Hartman:2013qma}. Their areas (lengths in this case) are
\begin{align}
\text{area}(\Sigma_1\cup \Sigma_1') &= 2\log\cosh\frac{\pi T}{\beta} \approx \frac{2\pi T}{\beta}\\
\text{area}(\Sigma_2) &= \log\sinh\frac{\pi \ell}{\beta} \approx \frac{\pi \ell}{\beta}.
\end{align}
where $\beta$ is the inverse temperature that the system thermalizes to at late time. Roughly, for $T < \ell/2$ the dominant surface is $\Sigma_1\cup\Sigma_1'$, with each of the two components extending from the boundary to the eow brane. For $T > \ell/2$ the minimizing surface is $\Sigma_2$, which connects the endpoints of the boundary interval while remaining outside the horizon of the black hole. As a function of time, the corresponding entropy grows linearly until reaching the thermal value at time $\ell/2$:
\begin{equation}
\text{entropy}(\ell) = s\times \begin{cases}2T & T < \ell/2 \\ \ell & T > \ell/2\end{cases}
\end{equation}
where $s \propto N_{\text{dof}}$ is the entropy density in the thermal state.

What does holography predict for the quantum complexity of the boundary subregion as a function of time? Complexity was introduced in the context of holography in \cite{Susskind:2014rva,Stanford:2014jda,Brown:2015bva,Brown:2015lvg}. Generalizations to the complexity of subsystems were proposed in \cite{Alishahiha:2015rta,Carmi:2016wjl,Agon:2018zso}, and applied to the time-evolved quantum quench in \cite{Chen:2018mcc}. Roughly, different proposals for holographic complexity amount to different reasonable measures of the ``size'' of the entanglement wedge. Here, the entanglement wedge is a spacetime region between the HRT surface and the asymptotic boundary. Its intersection with the spatial slice is shown by the shaded regions in \Cref{fig:bh}, and for the sake of an easy illustration, we will define the size of the entanglement wedge as the volume of this shaded region -- this corresponds roughly to the ``CV'' prescription \cite{Alishahiha:2015rta}. If we further define the complexity by subtracting the value in the thermal state, then the complexity is approximately just the volume of the gray shaded region that lies behind the horizon (dashed lines in \Cref{fig:bh}). Explicitly, for $\ell < L/2$
\begin{align}\label{compsmall}
\mathfrak{C}(\ell) &= \frac{s}{\beta}\times\begin{cases} \ell T & T < \ell/2 \\ 0 & T > \ell/2\end{cases}.
\end{align}
For the complementary region of size $L -\ell > L/2$, the entanglement wedge is the complement of the gray region within the spatial slice, and the corresponding formula is
\begin{align}
\mathfrak{C}(L-\ell) &= \frac{s}{\beta}\times \begin{cases} (L-\ell) T & T < \ell/2 \\ L T & T > \ell/2\end{cases}\hspace{20pt}.
\end{align}
It is important to note that the holographic complexity conjectures (especially for subsystems) have not been derived, so these formulas are quite tentative.

The most dramatic feature of the above formulas is the sudden collapse in the complexity of the smaller subsystem at time $T = \ell/2$. This sudden change is due to the following: when the entanglement surface jumps from $\Sigma_1\cup\Sigma_1'$ to $\Sigma_2$, the entanglement changes continuously, but the entanglement wedge itself changes abruptly.\footnote{Whether the entanglement wedge evolves continuously or discontinuously depends on details, and in the AdS$_3$ quench from the vacuum \cite{Chen:2018mcc} the transition is continuous. See earlier \cite{Albash:2010mv,Balasubramanian:2010ce,Liu:2013iza}.} To see what this implies for complexity, consider a family of slightly different time evolution problems, where we modify the CFT Hamiltonian by a time-dependent sequence of operator insertions. We will assume these operators are light enough that they don't change the geometry; they just add particles that fall into the black hole at different times
as depicted in~\Cref{fig:drawing2}.
\begin{figure}[h]
    \centering
    \includegraphics[scale = .75, valign = c]{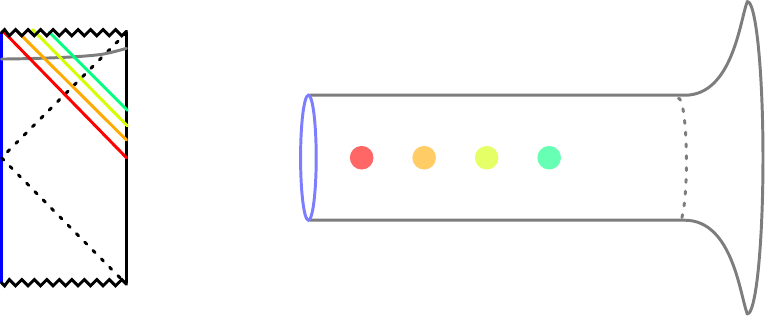}
    \caption{Light operator insertions at different times.}
    \label{fig:drawing2}
\end{figure}
Entanglement wedge reconstruction implies that the density matrix of the boundary subregion can distinguish evolution histories that lead to different sets of particles within the entanglement wedge \cite{Faulkner:2013ana,Jafferis:2015del}. For $T < \ell/2$, the entanglement wedge of the subregion has volume $\ell T$, so we can make $n^{\text{const}.\ell T}$ different states, where $n$ is the number of light operators one can choose from. This gives a lower bound on the complexity of order $\propto\log(n) \ell T$. This can be numerically large, but it is small compared to (\ref{compsmall}), which scales as $N_{\text{dof}}\ell t$, so this argument is not strong enough to validate (\ref{compsmall}).

A second interesting feature of these formulas is the fact that for early times, the complexity of either system grows slower than $L T$ -- this suggests that an improvement exists over the naive preparation recipe of creating the global pure state and then tracing out. We will return to this at the end of the next section.

\section{Can a subsystem remember the circuit?}\label{secreplica}
In the previous section, we offered intuition for the sudden decrease in complexity at time $T = \ell/2$ based on a similarly sudden decrease in the subsystem's memory of its own evolution history. Here, we will use the replica trick to analyze this for the random circuit, computing
\begin{align}
    F(\rho,\sigma) \equiv \tr\left(\sqrt{\sqrt{\rho}\sigma\sqrt{\rho}}\right)= \text{continuation}_{n\to 1/2} \tr\big[(\rho\sigma)^n\big]
\end{align}
where $\rho,\sigma$ are the density matrices for the subsystem of size $\ell$ produced by slightly different random circuits. If the fidelity is small, the subsystem can remember the difference.

\subsection{Warmup problem}
Before getting to the brickwork circuit, let's study a warmup problem in a Hilbert space of dimension $qQ$ with
\begin{align}
    \rho &= \tr_{q}\big[U |0\rangle\langle 0|U^\dagger\big]\\
    \hspace{30pt} 
    \sigma &= \tr_{q}\big[V U |0\rangle\langle 0|U^\dagger V^\dagger\big]
\end{align}
where $V$ is some fixed unitary and $U$ is Haar random.

The integral over $U$ of $\tr[(\rho\sigma)^n]$ simplifies if $q,Q\gg 1$. The leading terms in the Weingarten formula reduce to ``Wick contractions'' of $U$s with $U^\dagger$s. In other words, we can forget unitarity and replace the $U$ matrices by random tensors, reducing the problem to the one studied in \cite{Hayden:2016cfa}. These Wick contractions can be labeled by permutations $\pi$ of $2n$ objects, with the idea that the $i$-th copy of $U$ is contracted with the $\pi(i)$-th copy of $U^\dagger$. To analyze the sum over permutations, let's temporarily set $V = \one$. Then each permutation would contribute
\begin{equation}
    \frac{1}{(Qq)^{2n}}q^{\text{\# cycles in $\pi$}}\cdot Q^{\text{\# cycles in $\pi\circ \tau^{-1}$}}
\end{equation}
where $\tau$ is the cyclic permutation. For $q \gg Q$, this is dominated by the identity permutation, and this particular contribution is actually independent of $V$. On the other hand, for $Q \gg q$, the dominant contribution is from $\pi = \tau$, and that contribution depends strongly on $V$. Including these two extreme cases explicitly, one has
\begin{equation}
    \int \mathrm{d} U \tr_Q\big[(\rho\sigma)^n\big] \approx  \frac{1}{(Qq)^{2n}}\left\{\tr_q\left[\left(\tr_Q(V)\tr_Q(V^\dagger)\right)^n\right] + \dots +  Q q^{2n}\right\}.\label{terms}
\end{equation}
Analytically continuing the leading term to $n = 1/2$ gives 
\begin{equation}
    \int \mathrm{d} U F(\rho,\sigma) \approx \begin{cases} \frac{\tr_q\sqrt{\tr_Q(V)\tr_Q(V^\dagger)}}{qQ} & q \ll Q \\
    1 & q \gg Q\end{cases}.
\end{equation}
Corrections to this formula are suppressed by powers of $q,Q$ and/or $Q/q$, see Appendix E of \cite{Penington:2019kki} for some details. Note that if $V$ is such that the term on the first line is small enough, such corrections could compete, leading to a different answer than the one shown, although still small.

\subsection{$q,Q$ brickwork circuit}
For the random brickwork circuit, the replica problem is complicated by the fact that there are multiple patterns of Wick contractions that contribute at leading order. In the computation of the Renyi 2-entropy, this manifests in the sum over trajectories for the entanglement cut connecting the endpoints of the interval to the initial conditions.

One can avoid this by modifying the random circuit to remove the degeneracy. Specifically, choose even sites to have dimension $q$, and odd sites to have dimension $Q$. See~\Cref{figcircuitredlines}.
Then take
\begin{equation}
    q,Q\to \infty, \hspace{20pt} Q/q \gg 1.
\end{equation}

\begin{figure}
    \centering
    \includegraphics[scale = .8, valign = c]{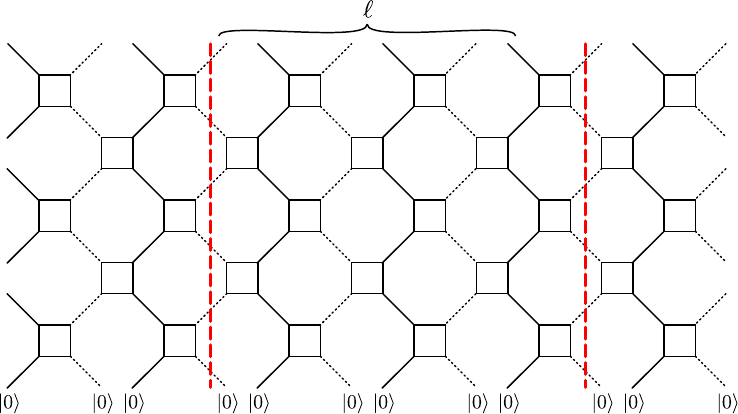}
    \caption{Random circuit on qudits of alternating dimensions $q$ and $Q$ where $q \ll Q$. 
    The solid lines mean Hilbert spaces of dimension $Q$, and the dotted lines mean Hilbert spaces of dimension $q$.
    The red dashed lines indicate the entanglement cut through this circuit that dominates for short times. Note that because $Q \gg q$, only $q$ lines are cut.}    
    \label{figcircuitredlines}
\end{figure}

Let's first study the Renyi entropies for the subsystem of $\ell$ sites. For each unitary gate in the quantum circuit, we have a permutation $\pi$.
For large $q,Q$, the dominant configuration is one where the permutations are either identity or cyclic, with a domain wall separating the two regions, \cite{Hayden:2016cfa}. There are two possible optimal configurations of the domain wall. One possibility is that the domain wall cuts horizontally without dipping down into the tensor network at all.\footnote{There are actually a family of degenerate cuts that dip down into the circuit but remain in the past domain of dependence of the $\ell$ interval. The sum over these would appear to give an enhancement to the Renyi entropy. However, these extra contributions are canceled by negative contributions from subleading non-Wick-contraction terms in the Weingarten formula. (These are the terms that distinguish truly unitary tensors from ordinary random tensors.) One can avoid this complication by using unitarity to remove the tensors in the past domain of dependence of the interval, making the cut unique.} The other possibility corresponds to the dashed red lines in \Cref{figcircuitredlines}. Depending on $\ell,T$, one or the other of these will dominate, and
\begin{equation}
    \tr \rho_{\ell}^{n}  = \text{max}\left\{\frac{1}{(qQ)^{(n-1)\ell/2}}, \frac{1}{q^{2(n-1)T}}\right\}.\label{secondtermdominates}
\end{equation}

Finally, we are ready to study
\begin{equation}
    \tr \left[(\rho_\ell\sigma_\ell)^n\right].
\end{equation}
Here, the only difference $\rho,\sigma$ is that one brick in the circuit has been modified $U \to V U$. In the late-time phase where $q^{2T} \gg (qQ)^{\ell/2}$, so that the first term in (\ref{secondtermdominates}) dominates, these insertions of $V$ operators have no effect because the permutation field was $\pi = 1$ everywhere on the lattice and $V$ trivially cancels out. Continuing to $n = 1/2$, one finds $F(\rho_\ell,\sigma_\ell) = 1$. In this regime, the density matrix is maximally mixed and has no memory of the specific circuit that prepared it.

In the early-time phase, the dominant permutation is modified to the cyclic permutation $\pi = \tau$ inside the entanglement wedge (the region between the red lines of \Cref{figcircuitredlines}). Now the answer depends on whether the $V$ modification takes place inside or outside the entanglement wedge\footnote{If the perturbation is inside but adjacent to the cut then one finds a slightly more complicated formula similar to the first term in (\ref{terms}).}
\begin{equation}
    \tr[(\rho_\ell\sigma_\ell)^n]  = \frac{1}{(qQ)^{(2n-1)\ell/2}}\begin{cases} 1 & V \text{ outside} \\ \left(\frac{|\tr V|}{qQ}\right)^{2n} & V \text{ inside}
    \end{cases}.
\end{equation}
Taking the limit $n \to 1/2$ and combining with the late time case, we get
\begin{equation}
    F(\rho_\ell,\sigma_\ell) = \begin{cases} 1 & q^{2T}\gg (qQ)^{\ell/2} \ \text{ or } \ V \ \text{ outside} \\
    \frac{|\tr V|}{qQ} & q^{2T} \ll (qQ)^{\ell/2} \ \text{ and } \ V \ \text{ inside}
    \end{cases}.
\end{equation}
So in the early time phase, circuits that differ within the entanglement wedge lead to approximately orthogonal density matrices. In other words, the density matrix remembers the portion of its preparation circuit that was inside the entanglement wedge, but it doesn't remember the part that was outside.

This gives some intuition for the sudden collapse of complexity. In order for the density matrix to be able to distinguish circuits that differ in any of $\ell T$ places, exponentially many different density matrices must be possible, leading to a $\ell T$ counting lower bound on the complexity. This distinguishing power disappears suddenly at time $\ell/2$. To turn this into a rigorous bound,
we would need enough control over the probability distribution for the fidelity
to show something like $\Pr[F(\rho_\ell, \sigma_\ell) > 0.5] \le (qQ)^{-\text{(const.)} \ell T}$.
In the next section we explore a similar question in a simpler setting.

\begin{figure}[h!]
    \centering
    \includegraphics[scale = 0.6, valign = c]{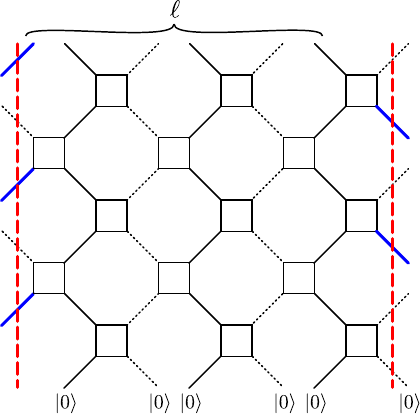}
    \caption{If the density matrix on the interval of length~$\ell$ does not remember gates outside, we may average over these gates,
    effectively replacing the thick blue lines by maximally mixed states.}
    \label{fig:drawing6}
\end{figure}

This result also sheds some light on another puzzling feature of the holographic formulas: before $T = \ell/2$, the complexity of either interval is proportional to its length times $T$, rather than the total length $L$. Because the density matrix does not remember the gates outside the entanglement wedge, we can average over these gates. This is equivalent to replacing the lines coming into the entanglement wedge 
(the thick blue lines in \Cref{fig:drawing6})
by maximally mixed states.
This gives a recipe with complexity $\ell T$. Note that this suggests a sudden jump in the complexity of the larger interval at time $\ell/2$. This is because after that time, the entanglement wedge of the larger interval expands to the entire circuit (except for the past domain of dependence of the small interval), so the simplified recipe stops being possible.

Here we give an argument that at late time, the complexity growth rate of the larger system (call it $B$) should indeed be proportional to $L$ and not $(L-\ell)$.
Once the smaller subsystem~$A$ thermalizes, 
the reduced density matrix on~$A$ is constant in time.
Uhlmann's theorem implies that all subsequent changes 
by unitary evolution~$U_{AB}$ on the full system~$AB$
can be implemented by some unitary~$V_B$ acting only on~$B$.
In the random circuit model 
in which the thermal state on~$A$ is the maximally mixed state,
the construction of~$V_B$ from~$U_{AB}$ can be seen as follows.
Let $\ket{\psi(T)_{AB}}$ be the state of~$AB$ at time~$T$.
The entanglement between $A$ and $B$ is maximal when $A$ is thermal,
and we have 
\begin{equation}
    \ket{\psi(T)_{AB}} = (\one_A \otimes V_B(T)) \ket{\Phi_{AB}} 
    = (\one_A \otimes V_B(T)) \sum_{a} \ket{a_A} \ket{a_{A'} 0_{B \setminus A'}}
\end{equation}
where $A'$ is a subregion of~$B$ that has the same Hilbert space dimension as~$A$.
The EPR state $\sum_a \ket{a_A}\ket{a_{A'}}$ is the $+1$-eigenstate of
all operators of form~$U_A \otimes U^T_{A'}$.
Hence,
\begin{equation}
    (O_A \otimes O_B) \ket{\psi(T)_{AB}} = (\one_A \otimes O_B V_{B} O^*_{A'}) \ket{\Phi_{AB}} \, .
\end{equation}
If $U_{AB}$ is the time-evolution unitary for small time~$\Delta$, 
we may write it as $U_{AB} = W_{\partial A} (O_A \otimes O_B)$
for some unitary~$W_{\partial A}$ acting on the interface between~$A$ and~$B$.
We see that the effect of~$U_{AB}$ is almost reproduced (except for~$W_{\partial A}$)
by~$V_B(T+\Delta) = O_B V_{B}(T) O^*_{A'}$.
Assuming absolute incompressibility of the unitaries,
the complexity change from $V_{B}(T)$ to $V_B(T + \Delta)$ 
is the complexity of~$U_{AB}$ which is $L$ rather than~$L - \ell$.

\section{Counting distinguishable states}\label{counting}
Here we consider various finite sets of states
that are pairwise as distinct as possible 
under the constraint that every state in 
the set has maximal entropy minus a few bits.
This shows that there is, in principle, enough room for large complexity
for a smaller subsystem just before it thermalizes.

\subsection{Using unitary $k$-designs}

Consider an ensemble of $d$-dimensional mixed states (with $d$ even):
\begin{align}
    \tau_U &= \frac 1 d 
            U 
            \diag( 
                \underbrace{1 + \alpha, \ldots, 1+\alpha}_{d/2} ,\, 
                \underbrace{ 1 - \alpha, \ldots, 1 - \alpha}_{d/2}
            ) 
            U^\dagger \, ,\\
    \alpha &= \norm*{\tau_U - \frac \one d }_1 \, .\nonumber
\end{align}
where $U \in \U(d)$ and $0 \le \alpha \le 1$.
The ensemble $\{\tau_U\}_U$ is defined by a probability distribution~$\nu$ of~$U$,
and we suppose that $\nu$ is an $\epsilon$-approximate unitary $k$-design~$U \sim \nu$ in the sense of~\cref{eq:approxDesignEps}. 
We ask how many states~$\tau_{U_i}$ we can find in this ensemble 
such that $\norm{\tau_{U_i} - \tau_{U_j}}_1 \ge \beta$ whenever~$i \neq j$.
Such a finite subset may contain more than one element only if~$\beta \le 2\alpha$.

With a rank-$\frac d 2$ projector $P = \diag(1,1,\ldots,1, 0, 0, \ldots, 0)$,
it is easy to see~\cite[Eq.(31)]{Haah2015} that 
\begin{equation}
    \norm{\tau_U - \tau_\one}_1 \ge 2 \alpha( 1 - 2 d^{-1} \Tr(PUPU^\dagger) ) .
\end{equation}
Then, 
\begin{align}
    \Pr_{U \sim \nu} \left[ ~
        \norm{\tau_U - \tau_\one}_1 < \beta
    ~\right]
    &\le
    \Pr_{U \sim \nu} \left[~
        \frac{4}{d} \Tr(PUPU^\dagger) > 2 - \frac \beta \alpha
    ~\right]\\
    &\le \frac{1}{( 2 - \beta \alpha^{-1} )^{k}} \avg_{U \sim \nu} \left( \frac 4 d \Tr(PUPU^\dagger) \right)^k & (\text{Markov}) \nonumber\\
    &\le 
    \frac{1+\epsilon}
    {( 2 - \beta \alpha^{-1} )^{k}} \avg_{V \sim \U(d)} \left( \frac 4 d \Tr(PVPV^\dagger) \right)^k &
    \text{by~\cref{eq:approxDesignEps}} \nonumber\\
    &\le
    \frac{1+\epsilon}{( 2 - \beta \alpha^{-1} )^{k}}
    \min(e^{2 k^2 / d^2 },\, k!) & \text{by~\cref{thm:randomProjectorOverlap}} \, . \nonumber
\end{align}

In the regime where $k < d$ and $\epsilon < 1$,
the numerator is bounded above by~$2 e^2 < 15$.
We conclude that in this regime 
there exist $N$ states in the $\alpha$-ball around~$\one_d / d$
that are pairwise $\beta$-separated in $1$-norm where
\begin{equation}
    N > \frac 1 {15} \left(2 - \frac \beta \alpha \right)^k.
\end{equation}

\subsection{Using full unitary group}

\begin{remark}
If we drop the condition that $U \sim \nu$ is a unitary $k$-design,
and consider the full set of~$\tau_U$ with arbitrary~$U$~\cite[Packing net II]{Haah2015},
then the maximum number~$N_0$ of pairwise $\beta$-separated states~$\tau_{U_i}$
is lower bounded by~\cref{thm:randomPQOverlapGaussian} as, when $\beta < \alpha$,
\begin{equation}
    N_0 \ge \exp\left( \frac{(1-\beta \alpha^{-1})^2}{16} d^2 \right)\, .
\end{equation}
\end{remark}

\begin{remark}
    Let $P$ be a projector of rank~$r$ and dimension~$d$.
    It is not difficult to show that $\norm{UPU^\dagger - P}_1 \ge 2 \Tr(UPU^\dagger (\one - P))$
    for any~$U \in \U(d)$.
    Using~\cref{thm:randomPQOverlapGaussian}, we find that if $r < \epsilon d$,
    \begin{equation}
        \Pr_{U \sim \U(d)} \left[ \norm*{U \frac P r U^\dagger - \frac P r}_1 < 2  - 2 \epsilon \right]
        \le
        \exp\left( - \frac{r d}{2} \frac{ (\epsilon - \frac r d)^2}{1 - \frac r d} \right) \, .
    \end{equation}
    This implies that there are at least $\exp( \frac 1 2 (\epsilon - \frac r d)^2 rd )$
    pairwise almost orthogonal states of rank~$r$.
\end{remark}

\begin{remark}
    Continuing in the same setting, let $F$ be the fidelity between two states obtained from random projectors of rank $r$ with $r < d/2$, and suppose we require $F<\varepsilon$. Then for $r\gg 1$ one can use random matrix methods to show (see \cref{app:rmt})
    \begin{align}\label{var}
         \log \Pr(F) &= -r^2\cdot(\text{some function of $F$, $r/d$})\\
         &=-r^2\cdot\left[\frac{\pi^2}{4\frac{r}{d}(1-\frac{r}{d})}\big(F-\mathbb{E}(F)\big)^2  + \dots\right]
    \end{align}
    where the dots are cubic and higher in $F - \mathbb{E}(F)$. The expectation value is
    \begin{equation}
        \mathbb{E}(F) = \frac{2}{\pi}\left(\sqrt{\frac{1-w}{w}} - \frac{1-2w}{w}\text{arcsin}(\sqrt{w})\right), \hspace{20pt} w \equiv \frac{r}{d}.
    \end{equation}
To make the fidelity between a typical pair small, we need $r \ll d$. Suppose the tolerance parameter $\varepsilon$ is much larger than the typical fidelity, $\sqrt{r/d} \ll \varepsilon \ll 1$. Then (\ref{var}) gives $\log \Pr(\varepsilon) = -\frac{\pi^2rd}{4}\varepsilon^2$ so there are $\exp\left(\frac{\pi^2}{4}\varepsilon^2rd \right)$ states with all pairwise fidelities less than $\varepsilon$.
\end{remark}

Note the following slightly counterintuitive aspect of this result. 
A quantum channel that outputs a maximally mixed state for all inputs has zero capacity. 
Moreover, if the output has maximal entropy minus a few bits, 
then the quantum capacity is at most a few bits. 
In this sense, such density matrices contain little quantum information.
However, the results above imply that one can find a very large number of 
nearly orthogonal density matrices with entropy just a few bits less than maximal.
In other words, channels with such outputs can have a high capacity for the classical and quantum identification problems, accurately (but not exactly) encoding the geometry of high dimensional 
subspaces~\cite{winter2005quantumclassicalmessageidentification,Hayden_2012,Hayden:2017xed}.

\section*{Acknowledgements}
We are grateful to Henry Lin and Patrick Hayden for discussions. We thank Yale Fan, Nicholas Hunter-Jones, Andreas Karch, and Shivan Mittal for coordinating submission of their paper \cite{Fan2025}. This work was supported in part by 
DOE grants DE-SC0021085, DE-SC0026143, and DE-SC0025934,
and by a grant from the Simons foundation (926198, DS).

\appendix 

\section{Deferred Proofs}

\subsection{Projector overlap}\label{app:randomProjLemma}

The following is our summary of the argument presented in~\cite{Hayden2003,Hayden2004}
for the proof of~\cite[Lemma~III.5]{Hayden2004},
which is reproduced below for readers' convenience.

\begin{proof}[Proof of~\cref{thm:randomProjectorOverlap}]
  For the first assertion, we may assume that $A,B$ are diagonal with~$1$ in the first~$a,b$ entries, respectively.
  Consider a vector of independent Gaussian entries, 
  $\ket{\tilde x} = (x_1 + \ii x_2, \ldots, x_{2ad-1}+\ii x_{2ad}) \in \CC^d \otimes \CC^a$.
  If we trace out $\CC^a$ from~$\proj{\tilde x}$ we obtain an unnormalized density 
  operator~$\tilde \rho \in \CC^{d \times d}$,
  whose trace has expectation value~$ad$.
  Since the distribution of~$\ket{\tilde x}$ is invariant under unitary multiplication,
  the distribution of~$\tilde \rho$ is invariant under unitary conjugation.
  The eigenspectrum of~$\tilde \rho$ has at most $a$ nonzero values since it is obtained 
  by tracing out the $a$-dimensional tensor factor.
  Hence, the distribution of $\tilde \rho$ is reproduced by 
  the distribution of~$U \tilde \eta U^\dagger$
  where $\tilde \eta$ follows some distribution of a diagonal matrix,
  of which all but the first $a$ entries are zero,
  and a Haar random~$U \in \U(d)$.
  Since the Haar distribution is invariant under multiplication by any permutation matrix,
  we may assume that the distribution of~$\tilde \eta$
  is invariant under permuting the first $a$ diagonal entries.
  This implies that $\avg_{\tilde \eta} \tilde \eta = d A$.
  Now,
  \begin{align}
    \avg_{x} \phi\left(\sum_{i=1}^{2ab} x_i^2 \right)
    &= 
    \avg_{x} \phi(\bra{\tilde x} B \otimes I_a \ket{\tilde x}) 
    = \avg_{\tilde \rho} \phi \Tr(B \tilde \rho) 
    = \avg_{U,\tilde \eta} \phi \Tr(B U \tilde \eta U^\dagger) \\
    &\ge \avg_U \phi \avg_{\tilde \eta} \Tr(B U \tilde\eta U^\dagger)
    = \avg_U \phi \Tr(d B U A U^\dagger) \nonumber
  \end{align}
  where the inequality is because $\phi$ is convex.
  The function $x \mapsto \phi(x / ab)$ is convex if and only if $\phi$ is.
  
  The second assertion is straightforward calculation with a probability density function $\frac{e^{-x^2}}{\sqrt \pi} $:
  \begin{align}
    \avg_{x_i} \left(\sum_{i=1}^{2m} x_i^2 \right)^k
    &=
    \int_{\RR^{2m}} \frac{\rd x_1}{\sqrt \pi} \cdots \frac{\rd x_{2m}}{\sqrt \pi} 
    (-\partial_{\lambda})^k |_{\lambda=1} \exp\left(-\lambda\sum_{i=1}^{2m} x_i^2\right) \nonumber\\
    &=
    (-\partial_{\lambda})^k |_{\lambda=1} 
    \left(
      \int_{\RR^{2}} \frac{\rd x_1 \rd x_2}{\pi} e^{-\lambda(x_1^2+x_2^2)}
    \right)^m
    = 
    (-\partial_{\lambda})^k |_{\lambda=1} 
    \lambda^{-m} \\
    &=
    m (m + 1) \cdots (m + k - 1) \, . \nonumber \qedhere
  \end{align}
\end{proof}

\begin{lemma}[Lemma~III.5 of~\cite{Hayden2004}]\label{thm:randomPQOverlapGaussian}
    Let $A$ and $B$ be orthogonal projectors on~$\CC^d$ of rank~$a$ and~$b$, respectively.
    Define $f(z) = z - \ln(1+z) = \frac 1 2 z^2 + \cdots$. For a Haar random~$U \in \U(d)$,
    \begin{align}
        \forall z > 0 &:~ 
            \Pr_U \left[ \frac d {ab} \Tr(U A U^\dagger B) \ge 1 + z \right] \le \exp(-ab f(z)) \, ,\\
        \forall z \in (-1,0) &:~
            \Pr_U \left[ \frac d {ab} \Tr(U A U^\dagger B) \le 1 + z \right] \le \exp(-ab f(z)) \, . \nonumber
    \end{align}
\end{lemma}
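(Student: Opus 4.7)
My plan is to run a standard Chernoff / large-deviation argument, with \cref{thm:randomProjectorOverlap} playing the role of the moment-generating-function (MGF) estimate. The key observation is that $\phi(u) = e^{\xi u}$ is convex for every real $\xi$, so that lemma immediately bounds the MGF of the Haar-random observable by the MGF of a sum of independent Gaussians:
\begin{equation*}
\avg_U \exp\!\bigl(\xi \cdot \tfrac{d}{pq}\Tr(UPU^\dagger Q)\bigr) \;\le\; \avg_{x_i} \exp\!\bigl(\tfrac{\xi}{pq}\sum_{i=1}^{2pq} x_i^2\bigr) \;=\; \bigl(1 - \xi/pq\bigr)^{-pq},
\end{equation*}
the last equality holding for $\xi < pq$ by factorization across the $2pq$ copies of the elementary Gaussian integral $\int \pi^{-1/2} e^{-(1-s)x^2}\,\rd x = (1-s)^{-1/2}$. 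The same computation with $\phi(u) = e^{-\xi u}$ yields $(1 + \xi/pq)^{-pq}$ for the MGF of the negative of the observable, valid for $\xi > -pq$.

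Next I would apply Markov's inequality to these exponential moments. For the upper tail with $z > 0$,
\begin{equation*}
\Pr_U\!\bigl[\tfrac{d}{pq}\Tr(UPU^\dagger Q) \ge 1 + z\bigr] \;\le\; e^{-\xi(1+z)}\,\bigl(1 - \xi/pq\bigr)^{-pq}
\end{equation*}
for any $\xi \in (0, pq)$. Writing $t = \xi/pq$ and differentiating the logarithm, the minimum sits at $t^\star = z/(1+z) \in (0,1)$; substituting back collapses the exponent to $-pq\,[z - \ln(1+z)] = -pq\, f(z)$, proving the upper-tail bound. The lower-tail case for $z \in (-1,0)$ is the mirror-image calculation:
\begin{equation*}
\Pr_U\!\bigl[\tfrac{d}{pq}\Tr(UPU^\dagger Q) \le 1 + z\bigr] \;\le\; e^{\xi(1+z)}\,\bigl(1 + \xi/pq\bigr)^{-pq}, \qquad \xi > 0,
\end{equation*}
which I optimize at $t^\star = -z/(1+z) > 0$ to again obtain the exponent $-pq\, f(z)$.

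I do not expect any real obstacle here. The entire argument is a textbook Chernoff / Legendre-transform optimization once the MGF comparison of \cref{thm:randomProjectorOverlap} is in hand. The only point meriting a sentence of verification is that each optimizer $\xi^\star$ lies in the domain of convergence of the relevant Gaussian MGF, which is immediate because $|z|/(1+z) < 1$ whenever $z>0$ or $z \in (-1,0)$.
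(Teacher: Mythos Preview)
Your proposal is correct and follows essentially the same approach as the paper: apply \cref{thm:randomProjectorOverlap} with the convex function $\phi(y)=e^{\pm\xi y}$, bound the MGF by the Gaussian integral $(1\mp\xi/pq)^{-pq}$, use Markov's inequality, and optimize at $\xi=pq\,|z|/(1+z)$. The paper simply states this optimal $\xi$ and leaves the Gaussian-integral and exponent computations to the reader, whereas you spell them out.
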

\begin{proof}
    Let $Z = \frac d {ab} \Tr(U A U^\dagger B)$.
    Markov's inequality implies that 
    $\Pr[Z \ge 1 + z] = \Pr[e^{\xi Z} \ge e^{\xi(1+z)}] \le \avg_U e^{\xi Z} e^{-\xi(1+z)}$
    and $\Pr[Z \le 1 + z] = \Pr[e^{-\xi Z} \ge e^{-\xi(1+z)}] \le \avg_U e^{-\xi Z} e^{\xi(1+z)}$
    for any~$\xi > 0$.
    Use $\phi : y \mapsto \exp(\pm \xi y)$ in~\cref{thm:randomProjectorOverlap} 
    with~$\xi = ab \abs z / (1+z)$,
    and calculate Gaussian integrals.
\end{proof}

\subsection{Purity of a finite interval on an infinite line}\label{app:Purity}

It was shown
in~\cite{Nahum2017} that the ensemble-averaged purity is
precisely the sum\footnote{Here $q$ is arbitrary integer, not necessarily a power of~$2$.} 
\begin{equation}
  \sum_{C} \left(\frac{q}{q^2 + 1}\right)^{\abs {\partial C}}
\end{equation}
where $C$ is a configuration of binary variables ($0$ or $1$), 
one for each of the random gates in the circuit and 
sites located at the top in a spacetime diagram,
subject to a boundary condition that 
all site binary variables outside the interval are $0$ (up) and all others $1$ (down)
and where the weight~$\abs{\partial C}$ is the length of domain wall,
calculated by the rule shown in~\cref{fig:domainwall}.

\begin{figure}[htb]
    \centering
    \includegraphics[width=0.8\textwidth, trim={0ex 15ex 5ex 19ex}, clip]{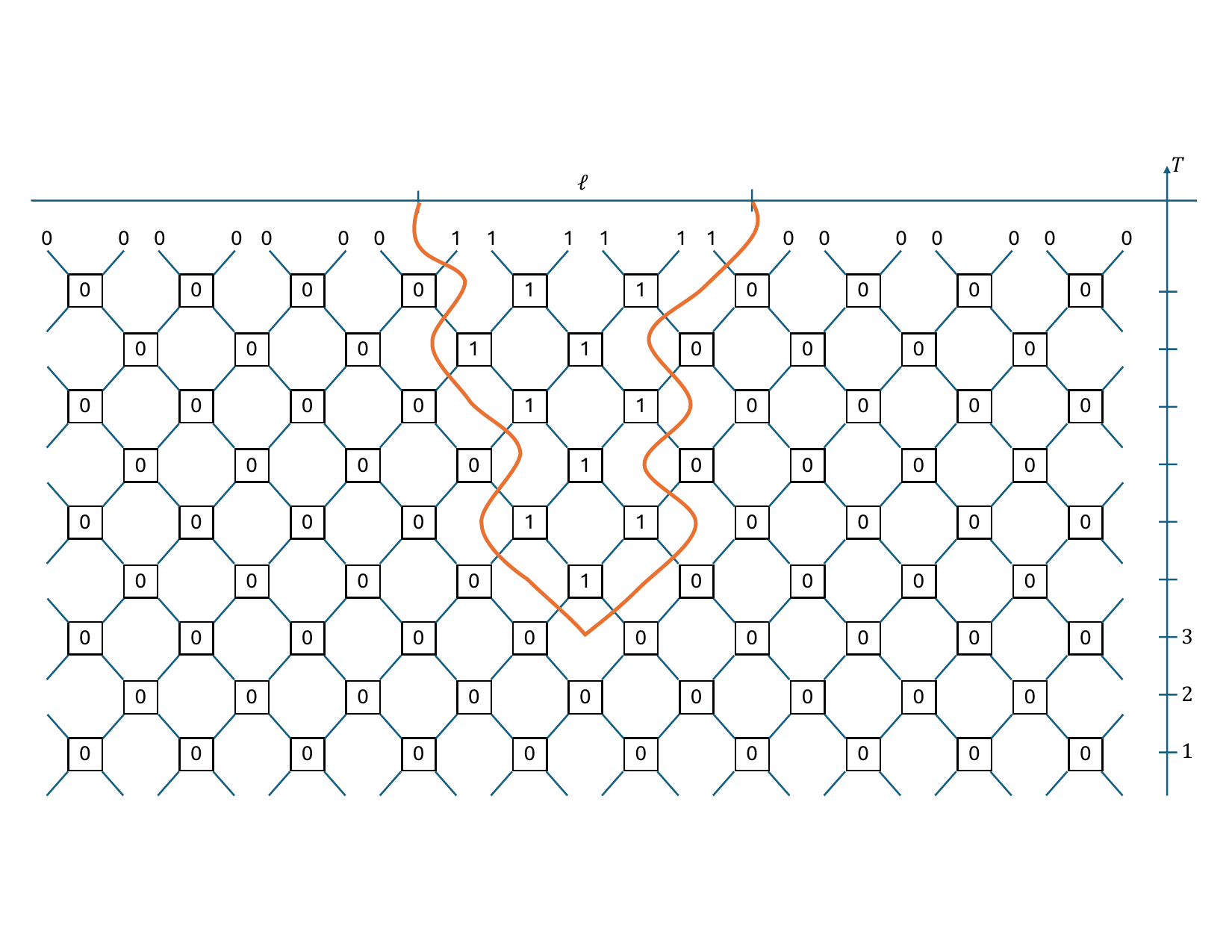}
    \caption{Two domains walls must propagate from the top to bottom and must not go up.
        They may or may not merge in the bulk.
        Each have weight~$7$ in the figure, giving total weight~$14$.}
    \label{fig:domainwall}
\end{figure}

\noindent
There are domain configurations with a single-component domain wall
or two-component domain wall, 
in the latter case of which the end points lie at the bottom.
We always think of domain walls propagating from the top to the bottom.
The weight of a configuration is determined by the time coordinate of the merger point.
If the merger happens at the slice of time coordinate $t > 0$,
the length of the domain wall is $2(T-t) < 2T$.
If two domain walls reach $t=0$, then they may or may not meet,
but in any case the total length of the domain walls is $2T$.

The evaluation of the purity boils down to the path counting,
starting from the boundary points of the interval,
propagating downward.
For concrete counting we introduce a coordinate system on a square lattice, 
as shown in~\cref{fig:domainwallcoord},
in which the boundary points of the interval are at $(\ell/2,0)$ and $(0,\ell/2)$.
The walkers go up or right only, and once they meet, the progression terminates.
The time coordinate~$t$ in the original spacetime of a point $(x,y)$ in the new coordinate system
is $t = T - x - y + \half \ell$. Define
\begin{equation}
  \eta = \frac{q}{q^2 +1} \, .
\end{equation}

\begin{figure}[hbt]        
        \centering
        \includegraphics[width=0.7\textwidth, trim={0ex 25ex 60ex 15ex}, clip]{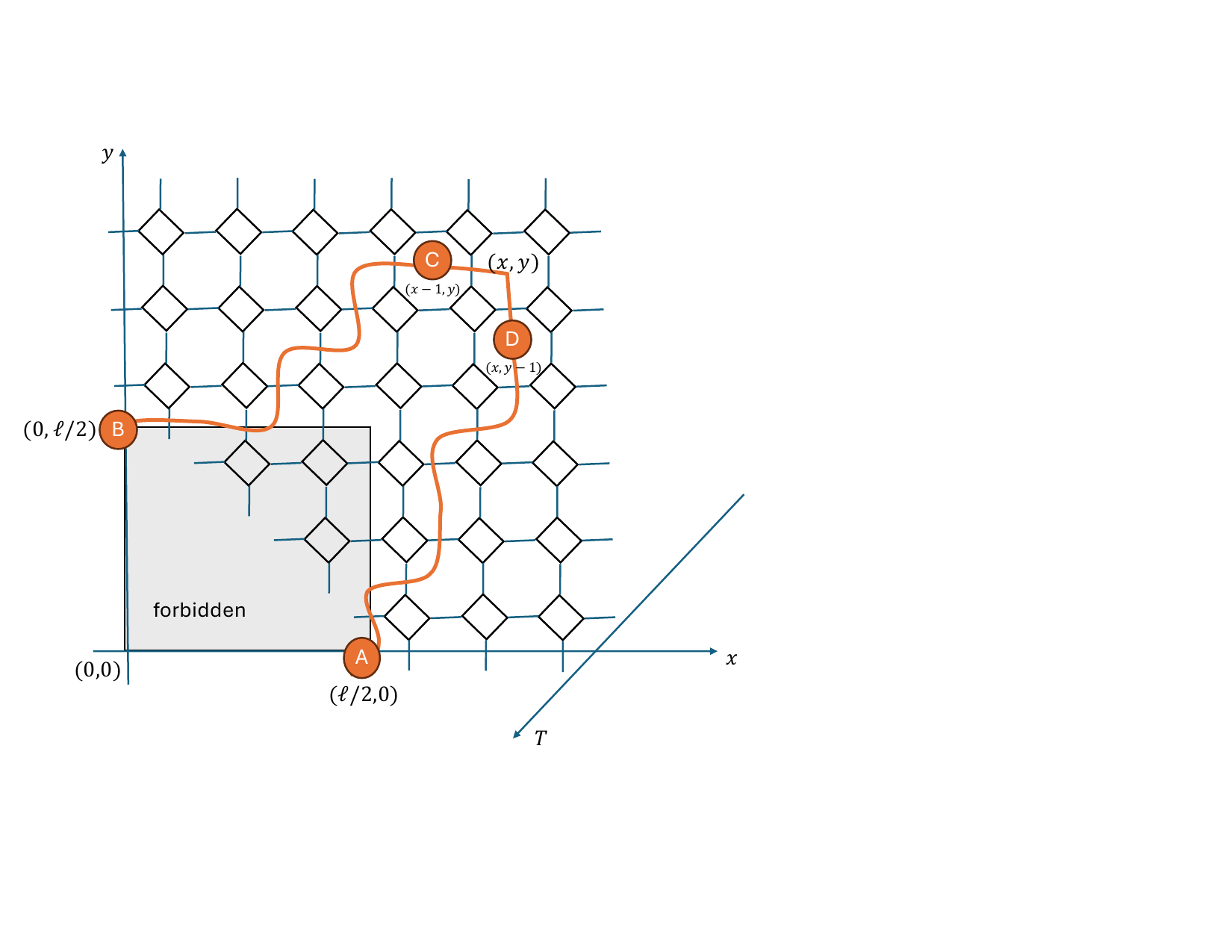}
  \caption{Coordinate system for counting the domain walls.
  The points $C$ and $D$ are present only if the two domain walls merge.
        The meeting point is neither $C$ nor $D$; 
        these points are immediately before the merger
        and are used to evaluate~$N(x,y;\ell)$.}
  \label{fig:domainwallcoord}
\end{figure}

The number of paths that meet at $(x,y)$ for the first time where $x,y \ge \half \ell > 0$
can be counted by reflection trick:
\begin{align}
  N(x,y;\ell) 
  &= 
  \#\{ (\half \ell,0) \to (x,y-1)\}\cdot \#\{ (0,\half \ell) \to (x-1,y) \} \nonumber\\
  & \qquad -
  \#\{ (\half \ell,0) \to (x-1,y)\}\cdot \#\{ (0,\half \ell) \to (x,y-1) \} \\
  &=
  \binom{\tau-1}{y-1} \binom{\tau-1}{x-1}
  -
  \binom{\tau-1}{y} \binom{\tau-1}{x}  \ge 0 & (\tau = x+y-\half \ell) \nonumber
\end{align}
The total length of any such merging paths is~$2\tau$.
We define
\begin{equation}
  N(x,y;0) = \begin{cases} 1 & \text{if } x = y = 0 \\ 0 & \text{otherwise} \end{cases} .
\end{equation}
This makes sense in view of the path counting.
Note that $N(x,y;\ell) = 0$ if either $x < \frac \ell 2$ or $y < \frac \ell 2$.

\begin{lemma}\label{lem:infiniteInfinitePurity}
  For any real $q \ge 1$ and integer~$\ell \ge 0$, we have
  \begin{equation}
    \sum_{x, y = \frac\ell 2}^{\infty} N(x,y;\ell) \, \eta^{2x +2y - \ell}  = \frac{1}{q^\ell} \, .
  \end{equation}
\end{lemma}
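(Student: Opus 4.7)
My plan is to reduce the two-dimensional sum to a one-dimensional sum indexed by $\tau=x+y-\ell/2$ (the number of steps each walker takes), then recognize the resulting series as an evaluable combination of Catalan generating functions.

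First I would fix $\tau\ge m:=\ell/2$ and compute the partial sum $\sum_{x+y=\tau+m,\,x,y\ge m} N(x,y;\ell)$. Writing $x=m+a,y=m+b$ with $a+b=\tau-m$, the two terms of the reflection formula each become a Vandermonde convolution of binomials $\binom{\tau-1}{\cdot}$. Specifically,
\begin{equation}
\sum_{j=0}^{\tau-m}\binom{\tau-1}{m+j-1}\binom{\tau-1}{\tau-j-1}=\binom{2\tau-2}{\tau+m-2},\qquad \sum_{j=0}^{\tau-m}\binom{\tau-1}{m+j}\binom{\tau-1}{\tau-j}=\binom{2\tau-2}{\tau+m},
\end{equation}
where the index-range issues are benign because the out-of-range binomials vanish. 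Thus the desired sum becomes
\begin{equation}
S(\eta,m)=\sum_{\tau\ge m}\eta^{2\tau}\left[\binom{2\tau-2}{\tau+m-2}-\binom{2\tau-2}{\tau+m}\right].
\end{equation}

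Next I would set $z=\eta^2$ and $n=\tau-1$ and apply the classical identity $\sum_{n\ge 0}\binom{2n+k}{n}z^n=C(z)^k/\sqrt{1-4z}$ with $C(z)=(1-\sqrt{1-4z})/(2z)$. Using $\binom{2n}{n+k}=\binom{2n}{n-k}$ and reindexing, one obtains $\sum_n\binom{2n}{n+k}z^n=z^kC(z)^{2k}/\sqrt{1-4z}$. The difference $\binom{2n}{n+m-1}-\binom{2n}{n+m+1}$ therefore has generating function
\begin{equation}
\frac{z^{m-1}C(z)^{2m-2}\bigl[1-z^2C(z)^4\bigr]}{\sqrt{1-4z}}=\frac{z^{m-1}C(z)^{2m-1}(2-C(z))}{\sqrt{1-4z}},
\end{equation}
where I used the Catalan functional equation $zC(z)^2=C(z)-1$ to write $1-z^2C(z)^4=C(z)(2-C(z))$. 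Hence $S=\eta^{2m}C(\eta^2)^{2m-1}(2-C(\eta^2))/\sqrt{1-4\eta^2}$.

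Finally I substitute $\eta=q/(q^2+1)$ and verify the arithmetic miracle. A direct calculation gives $1-4\eta^2=(q^2-1)^2/(q^2+1)^2$ and
\begin{equation}
C(\eta^2)=\frac{q^2+1}{q^2},\qquad \eta\,C(\eta^2)=\frac{1}{q},\qquad 2-C(\eta^2)=\frac{q^2-1}{q^2},
\end{equation}
so the square-root factors cancel and the whole expression collapses to $q^{-2m}=q^{-\ell}$. The main conceptual step is recognizing step 2, i.e.\ that the Vandermonde-reduced sum is exactly the kind of balanced-ballot generating function expressible in closed form via the Catalan series; the anticipated obstacle is not genuine but bookkeeping, namely tracking the edge cases $\tau=m$ and $\tau=m+1$ to confirm that extending summation ranges only adds zero terms, and similarly the $\ell=0$ base case, which must be checked by hand since the formula for $N$ assumes $x,y\ge 1$.
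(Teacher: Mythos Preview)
Your argument is correct and takes a genuinely different route from the paper. After the Vandermonde reduction, you evaluate the full one-parameter family $S(\eta,m)$ in one shot via the closed-form generating function $\sum_n\binom{2n}{n+k}z^n=z^kC(z)^{2k}/\sqrt{1-4z}$ and the Catalan functional equation. The paper instead introduces $Q(\ell)=\eta^{-\ell}S(\eta,\ell/2)$, proves the multiplicativity $Q(\ell)Q(2)=Q(\ell+2)$ by a combinatorial bijection on pairs of lattice paths (cutting at the first line $x+y=g$ where the two walkers are adjacent, and capping with an element of the $\ell=2$ ensemble), and then only has to evaluate the single series $Q(2)$ explicitly. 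Your approach is shorter and purely analytic; the paper's approach exposes the recursive/product structure of the path ensemble, which is conceptually appealing but requires more combinatorial work. Two minor points: in your displayed intermediate formula the prefactor should be $z^{m}$ rather than $z^{m-1}$ (the extra $z$ comes from the shift $\tau=n+1$), though your final expression $S=\eta^{2m}C^{2m-1}(2-C)/\sqrt{1-4\eta^2}$ is correct; and at $q=1$ both numerator and denominator vanish, so you should invoke continuity (Abel-type) to cover that boundary case, as the paper also does separately.
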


\begin{proof}
  Assume $q > 1$.
  Since $\eta = q / (q^2 + 1) < \half$ and $N(x,y;\ell) \le 2^{2\tau} \le 2^{2x+2y}$,
  the infinite sum converges absolutely. Consider
  \begin{equation}
    Q(\ell) = \sum_{a,b = 0}^\infty N(a + \tfrac \ell 2, b + \tfrac \ell 2; \ell) \eta^{2a + 2b}.
  \end{equation}
  We claim that $Q(0) = 1$, $Q(\ell)Q(2) = Q(\ell+2)$, and $Q(2) = (1+q^{-2})^2$,
  which will complete the proof.
  
  The first claim $Q(0) = 1$ is immediate by definition of~$N(x,y;\ell)$.

  The second claim with $\ell > 0$ follows from
  \begin{equation}
    \sum_{a+b+c+d = f} N(a+ \tfrac \ell 2, b+ \tfrac \ell 2; \ell) N(c + 1, d+1; 2)
    =
    \sum_{u+v = f} N(u + \tfrac \ell 2 + 1, v + \tfrac \ell 2 + 1; \ell + 2) . \label{eq:combi}
  \end{equation}
  This may be proved by tedious binomial coefficient calculation,
  but a combinatorial proof is possible as follows.
  For any even integer $z \ge 0$, let $\calN(z)$ be the set of all paths
  that start from $(\frac z 2,0)$ and $(0,\frac z 2)$ and merge at some point by moving up and right.
  Take an element $\gamma \in \calN(\ell + 2)$ merging at $(x',y')$ with $x'+y' = f+\ell+2$.
  Let $g = x+y$ be the \emph{least} such that $(x-1,y)$ and $(x,y-1)$ are on~$\gamma$.
  Since the subpaths of $\gamma$ meet at $(x',y')$ for the first time,
  we must have $g \le f + \ell < x' + y'$.
  Cut at $x+y=g$ line, the path~$\gamma$ splits into three components.
  The top-right component defines an element of~$\calN(2)$.
  If we shift the upper left component to the right by~$1$ 
  and shift the lower right component up by~$1$,
  then these shifted components merge at $(x,y)$ but nowhere else since $g$ was chosen to be the least.
  This shift-merged path defines an element of~$\calN(\ell)$.
  Conversely, given $\gamma' \in \calN(\ell)$ and $\gamma'' \in \calN(2)$
  where the sum of total lengths is $2(f+\ell+2)$,
  we can break the tip (the merger point) of $\gamma'$,
  shift the separated components left and down,
  cap them with $\gamma''$, 
  and obtain an element of~$\calN(\ell+2)$.
  This proves~\cref{eq:combi} and in turn the second claim.

  For the third claim we have
  \begin{align}
    Q(2) 
    &= \sum_{k=0}^\infty \sum_{a+b = k} N(a + 1, b + 1; 2) \eta^{2a + 2b} \\
    &= \sum_{k=0}^\infty \eta^{2k} \sum_{a+b = k} \binom{k}{a} \binom{k}{b} - \binom{k}{a+1}\binom{k}{b+1}\nonumber\\
    &= \sum_{k=0}^\infty \eta^{2k} \left[\binom{2k}{k} - \binom{2k}{k+2} \right]\nonumber\\
    &= \sum_{k=0}^\infty \eta^{2k} \frac {1} {k+1} \binom{2k+2}{k} \, . \nonumber
  \end{align}
  Using $(1-4u)^{-1/2} = \sum_{k=0}^\infty \binom{2k}{k} u^k$ for $4u < 1$, we have
  \begin{align}
    \half - \half (1-4\eta^2)^{1/2} = \int_0^{\eta^2} (1-4u)^{- 1/2} \rd u   
    &= \sum_{k=0}^\infty \eta^{2(k+1)} \frac{1}{k+1} \binom{2k}{k} \nonumber\\
    &= \eta^2 + \sum_{k = 1}^\infty \eta^{2(k+1)} \frac{1}{k}\binom{2k}{k-1} \, .
  \end{align}
  Rearranging, we get
  \begin{equation}
    \half - \half (1-4\eta^2)^{1/2} - \eta^2 = \eta^4 \sum_{m=0}^\infty \eta^{2m} \frac{1}{m+1} \binom{2m+2}{m} \, .
  \end{equation}
  Therefore,
  \begin{equation}
    Q(2) = \eta^{-4}\big(\half - \half(1-4\eta^2)^{1/2} - \eta^2 \big) = (1 + q^{-2})^2 \, .
  \end{equation}

  Even if $q = 1$, the calculation for the third claim is still valid since $\binom{2k}{k} \sim 2^{2k}(\pi k)^{-1/2}$
  for large~$k$,
  and we have $Q(2) = 4$ and hence $Q(\ell) = Q(2)^{\ell/2} = 2^\ell$.
  The lemma is true with $q=1$.
\end{proof}

Next, we examine part of domain wall restricted to the region 
where $x + y \le T + \frac \ell 2$.
Define
\begin{align}
  N(z; \ell) &= \sum_{x,y \,:\, x+y=z} N(x,y;\ell) 
  = \begin{cases} 0 & \text{if } z < \ell \\ 1 & \text{if } z = \ell \end{cases} \, ,\nonumber \\
  \calJ(T;\ell) &= \left\{ \gamma|_{x+y \le T + \frac \ell 2} \,\middle|\, \gamma \in \calN(\ell) \text{ merging at $(x,y)$ with $x+y \ge T +\frac \ell 2$ }\right\}\, ,\\
  J(T;\ell) &= \# \calJ(T;\ell) \quad \text{ which is }\begin{cases} \le 4^T & \text{always}\\ = 4^T & \text{if } T < \frac \ell 2 \end{cases}\, . \nonumber
\end{align}
Since the domain walls either remain disjoint or merge,
the purity of the interval at time~$T$ is then
\begin{equation}
  P(T;\ell) = J(T;\ell) \eta^{2T} + \sum_{z < T + \frac \ell 2} N(z;\ell) \eta^{2z - \ell} \, .
  \label{eq:defP}
\end{equation}

\begin{proof}[Proof of~\cref{thm:InfinitePurityBound}]
  (Upper bound)
  The sum over~$z$ in~\cref{eq:defP} is of nonnegative numbers
  and converges to $q^{-\ell}$ as $T \to \infty$ by~\cref{lem:infiniteInfinitePurity}.
  Therefore, $P(T;\ell) - q^{-\ell} \le J(T;\ell)\eta^{2T} \le 4^T \eta^{2T}$.

  (Lower bound)
  Let $\calJ(r;T;\ell) \subset \calJ(T;\ell)$ be the set of all configurations
  such that the two end points on the line $x+y=T$ of the domain wall 
  are separated by $\ell_1$-distance~$r$.
  The distance~$r$ is even and $0 \le r \le 2T + \ell$.
  Let $J(r;T;\ell) = \# \calJ(r;T;\ell)$.
  If we apply~\cref{lem:infiniteInfinitePurity} 
  to the two end points of $\gamma \in \calJ(r;T;\ell)$,
  we see that
  \begin{equation}
    \sum_{z \ge T + \frac \ell 2} N(z;\ell) \eta^{2z - \ell} 
    = 
    \sum_{r = 0}^{2T+\ell} J(r;T;\ell) q^{-r} \eta^{2T} \, .
  \end{equation}
  Since $\calJ(T;\ell) = \bigcup_r \calJ(r;T;\ell)$,
  we see $J(T;\ell)\eta^{2T} = \sum_r J(r;T;\ell) \eta^{2T}$.
  Therefore,
  \begin{equation}
    P(T;\ell) - q^{-\ell} = \sum_{r=0}^{\ell+2T} J(r;T;\ell)(1 - q^{-r})\eta^{2T} 
    \ge J(\ell;T;\ell)(1-q^{-\ell})\eta^{2T} \, .
  \end{equation}
  It remains to show that $J(\ell;T;\ell) \ge \frac{1}{T+1} \binom{2T}{T}$.
  By the reflection trick, it is not difficult to see that
  \begin{align}
    J(\ell;T;\ell) 
    &= \sum_{s=0}^T \binom{T}{s}\binom{T}{s} - \binom{T}{T - s + \frac \ell 2} \binom{T}{s + \frac \ell 2} \nonumber\\
    &= \binom{2T}{T} - \binom{2T}{T+\ell} \\
    &\ge \binom{2T}{T} - \binom{2T}{T+1} = \frac{1}{T+1} \binom{2T}{T} \, . \nonumber \qedhere
  \end{align}
\end{proof}

\Cref{thm:InfinitePurityBound} 
basically means that for large $T$ the interaction between two walkers is weak.
However, this interpretation should not be taken too literally.
Were it not for the interaction, 
we would not have the convergence of the reduced density matrix
to the maximally mixed state.
Note that if $T < \ell /2$, then $P(T;\ell) = ( \frac{2q}{q^2+1} )^{2T}$ 
since there are two independent components of the domain wall~\cite{Nahum2017}.

\section{Tight norm conversions}

The trace distance and fidelity between two density matrices 
are operationally meaningful but analytically difficult.
So, we have resorted to $2$-norms at the expense of a dimension factor 
in our complexity lower bounds.
Such a dimension factor is known to be optimal
if the eigenspectrum of a density matrix is flat.
However, in a rudimentary numerical investigation,
we have observed that the entanglement spectrum of the half-infinite chain
under the brickwork random unitary circuit
is quite far from flat.
Hence, it is natural to wonder if some entropy-derived quantities
can be used in place of the dimension factor in norm conversions.
Here, we remark that such a dimension factor is unavoidable 
even if we constrain a constant number of Renyi $m$-entropy values
with $m \ge 1$.

\begin{remark}
Let us show that the inquality for two normalized density matrices~$\rho,\sigma$
\begin{equation}
    \norm{\rho - \sigma}_1 \le 2 \sqrt{\min(\rank \rho, \rank \sigma)} \norm{\rho - \sigma}_2 \label{eq:OneTwoNormsRank}
\end{equation}
cannot be improved even if we constrain Renyi-($m \ge 1$) entropies.
Let $a = \diag(a_1 \ge a_2 \ge \cdots \ge a_r > 0 = \cdots = 0)$ 
be a diagonal density matrix of rank~$r$ and some large dimension~$d$.
Let $0 < \epsilon < a_r$, and define $b = \diag(b_1 \ge b_2 \ge \cdots b_d)$ by
\begin{equation}
    b_1 = a_1 - \epsilon,\quad 
    b_2 = a_2 - \epsilon, \quad
    \ldots\quad 
    b_r = a_r - \epsilon, \quad 
    b_{r+1} = b_{r+2} = \cdots = b_{d} = \frac{r \epsilon}{d-r} 
\end{equation}
where $d$ is so large that $b_{r} \ge b_{r+1}$.
Then,
\begin{align}
    \norm{a - b}_1^2 &= (2 \epsilon r)^2 \, , \nonumber\\
    \norm{a - b}_2^2 &= r \epsilon^2 + \frac{r^2 \epsilon^2}{d-r}\, ,\\
    \frac{\norm{a-b}_1^2}{\norm{a-b}_2^2} &= 4 r \left( 1 - \frac r d \right) \, . \nonumber
\end{align}
so \eqref{eq:OneTwoNormsRank} is tight in the limit of large~$d$.
On the other hand,
\begin{align}
    \frac 1 r \le \Tr(a^2) < 1, \qquad \Tr(b^2) \to \Tr(a^2) \text{ as } \epsilon \to 0 \, 
\end{align}
where $\Tr(a^2)$ can be anything in that interval.
Therefore, constraining the purity does not improve~\eqref{eq:OneTwoNormsRank}.
Moreover, constraining any fixed number of Renyi-$m$ entropies for $m=1,2,3,\ldots$
does not improve~\eqref{eq:OneTwoNormsRank}.
\end{remark}

\begin{remark}
Let us show the opposite inequality, which is more important for our purposes,
\begin{equation}
    \big( \norm{\sqrt{\rho}\sqrt{\sigma}}_1 \big)^2 \le \min(\rank \rho,\rank \sigma) \Tr(\rho \sigma)
\end{equation}
cannot be improved even if we constrain Renyi-($m \ge 1$) entropies.
Let $\rho = \diag(1-\epsilon, \frac \epsilon d, \cdots, \frac \epsilon d, 0)$ and
$\sigma=\diag(0, \frac \epsilon d, \cdots, \frac \epsilon d, 1- \epsilon)$
be diagonal density matrices of dimension~$d+2$.
We see that
\begin{align}
    \frac{\norm{\sqrt{\rho}\sqrt{\sigma}}_1^2}{\Tr(\rho \sigma)}
    =
    d
\end{align}
for any~$\epsilon > 0$.
Observe that the Renyi-$m$ entropy for any $m \ge 1$ converges to zero as $\epsilon \to 0^+$.
\end{remark}

\section{Shallow unitary designs as a model of chaos}\label{app:shallowDesigns}

A unitary $k$-design 
(a probabilistic ensemble of unitaries that is close to the Haar random one) 
is sometimes taken as a model of chaotic dynamics~\cite{Brandao2019}.
Here we explain significant differences 
between ``shallow'' unitary designs~\cite{LaRacuente2024,Schuster2024,Cui2025}
and the brickwork random circuits in the phenomenology of subregion complexity.
We point out that 
the complexity of the reduced density matrix of an interval of length~$\ell \ll L$ (say $\ell = L/100$)
at depth~$T \gg \ell$, is of order~$T$ 
under the $\epsilon$-approximate unitary $k$-designs constructed in~\cite{Schuster2024}.
Importantly, $T$ here can be exponentially large in~$\ell$.
This contrasts sharply to the brickwork random circuit,
under which the complexity at such late times is zero 
since the state is the maximally mixed state 
up to a trace-distance error that is exponentially small in~$\ell$.

\begin{figure}[h]
    \begin{subfigure}{0.48\textwidth}
        \includegraphics[width=\textwidth, trim={0mm 30mm 0mm 45mm}, clip]{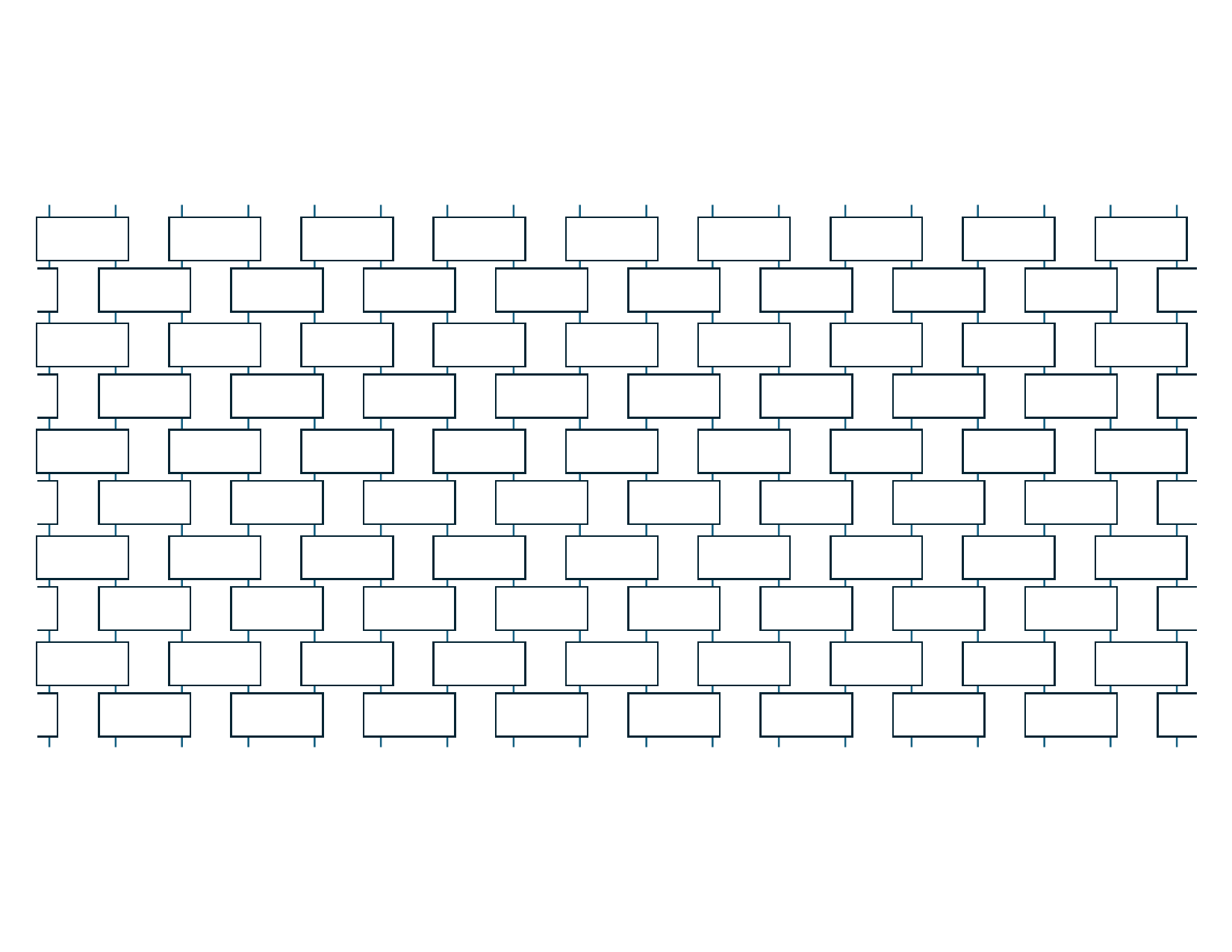}
        \caption{Brickwork circuit in $1+1$d}
    \end{subfigure}
    \begin{subfigure}{0.48\textwidth}
        \includegraphics[width=\textwidth, trim={0mm 30mm 0mm 30mm}, clip]{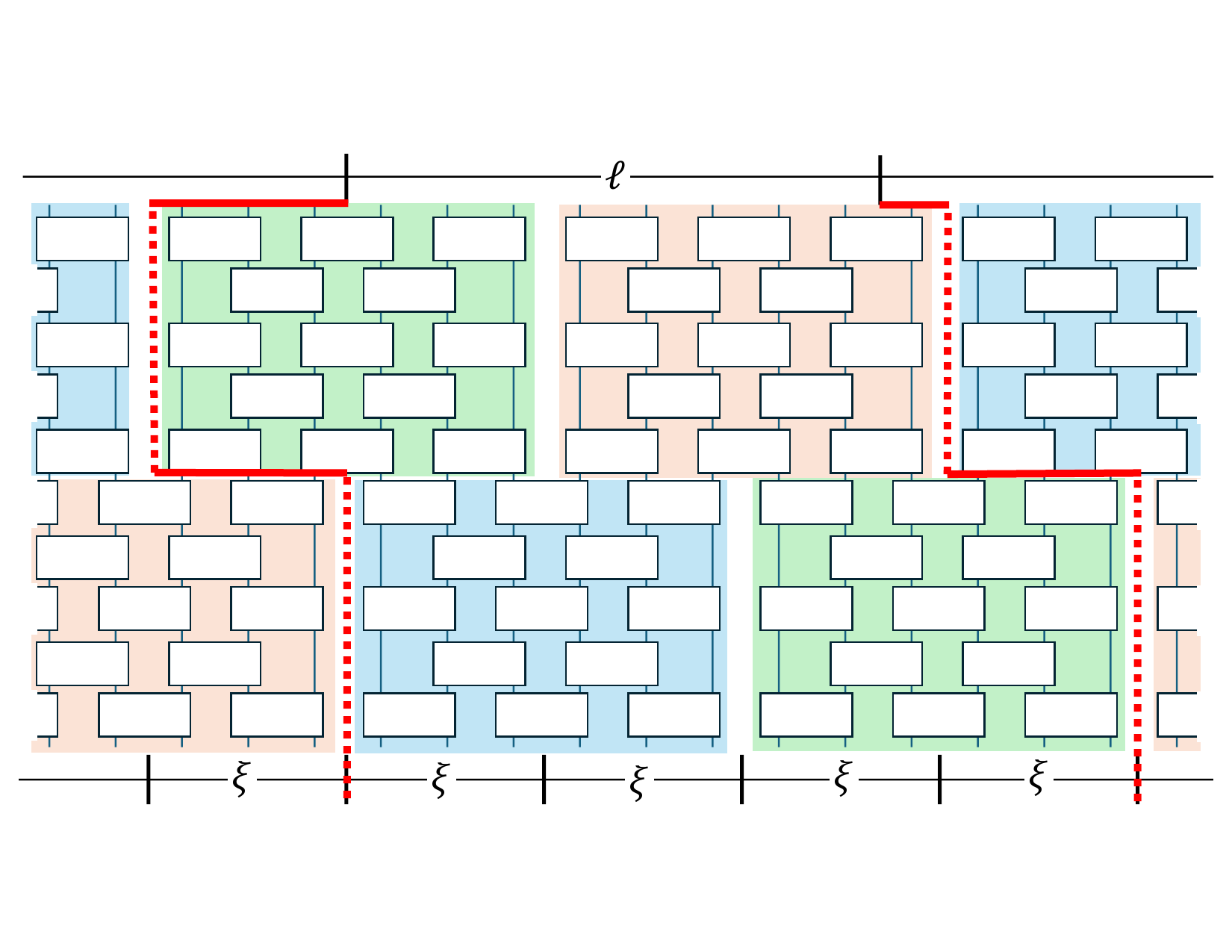}
        \caption{Same as (a) but with the gates straddling between the supergates deleted.}
    \end{subfigure}
    \caption{With gates along $2\xi$-separated columns removed, 
    we still have an approximate unitary $k$-design
    where $k$ is proportional to the depth.
    Starting with a pure product state, 
    the state of an interval under (b) has much smaller entropy than that under (a).}
    \label{fig:ShallowDesign}
\end{figure}

Consider a $1+1$d circuit~$U$ of supergates with depth~$2$
over $q^\xi$-dimensional superqudits,
where every supergate acts on neighboring two superqudits
and is a random unitary drawn from an $\epsilon_0$-approximate unitary $k$-design.
For sufficiently large~$\xi$, the circuit~$U$ is an approximate unitary $k$-design,
where the approximation error~$\epsilon$ is bounded above by iteratively applying the following.
\begin{lemma}[Lemma~8 of~\cite{Schuster2024}]
    Let $U_{AB}$ and $U_{BC}$ be independent approximate unitary $k$-designs
    of errors~$\epsilon_{AB}$ and~$\epsilon_{BC}$
    acting on subsystems~$AB$ and $BC$, respectively.
    Assume that $k^2 d_B^{-1} \le \tfrac 1 2$ where $d_B$ is the Hilbert space dimension of~$B$.
    Then, the product $U_{AB} U_{BC}$ is an $\epsilon_{ABC}$-approximate unitary $k$-design
    where
    \begin{equation}
        1+\epsilon_{ABC} 
        \le (1+\epsilon_{AB})(1+\epsilon_{BC})
        e^{\frac{k^2}{2 d_B} }
        \left( 1 + \frac{k^2}{d_{ABC}} \right)
        \left( 1 - \frac{k^2}{2 d_{AB}} \right)^{-1}
        \left( 1 - \frac{k^2}{2 d_{BC}} \right)^{-1} \, . \label{eq:epsInSUD}
    \end{equation}
\end{lemma}
\noindent
Let $x = k^2 / d_B \le \frac 1 2$. 
For any positive real~$z$, we use the notation $z' = \ln(1+z)$.
Since $d_{AB} \ge d_B$ and $d_{BC} \ge d_B$,
all the $k$-involving factors are at most~$1+x$.
Taking log of~\eqref{eq:epsInSUD} we have that
\begin{equation}
    \epsilon'_{ABC} + 4 x' \le (\epsilon'_{AB} + 4 x') + (\epsilon'_{BC} + 4 x') \, . \label{eq:epsInSUD2}
\end{equation}
Applying~\eqref{eq:epsInSUD2} iteratively, once for each supergate,
we see that
the full circuit~$U$ of $m$ supergates is an $\epsilon$-approximate unitary $k$-design
where
$
    \epsilon' \le m \epsilon'_0 + 4 m x'
$ if $k^2 q^{-\xi} \le \frac 1 2 $.
Use $m \le L / \xi$ and $z' \le z$,
and choose
$\xi = c \ell = c^2 L $ with $c = 1/100$.
Then, the approximation error~$\epsilon$ of~$U$ as a unitary $k$-design satisfies
\begin{equation}
    \ln( 1 + \epsilon) \le c^{-2} \epsilon_0 + 4 c^{-2} k^2 q^{-c \ell}\, .
\end{equation}
Each random supergate can be chosen to be a brickwork circuit 
consisting of elements of~$\U(q^2)$ in $T/2$ layers.
So, the full circuit~$U$ has depth~$T$.
The result of~\cite{Chen2024} implies that each supergate is an $\epsilon_0$-approximate
unitary $k_0$-design for some $k_0 \ge T / \poly(\xi)$ with $\epsilon_0 = c^3$.
Setting $k = T / \poly(\xi)$,
we have $\epsilon' \le c + q^{-c \ell} T^2 / \poly(c \ell)$,
which is less than~$0.1$ if $T < T_{max} = \tilde{\mathcal{O}} (q^{c \ell / 2})$.%
\footnote{In~\cite{Schuster2024},
    the overlap size $\xi$ was taken to be $\mathcal O(\log_q(L k/\epsilon))$
    to minimize~$T$ given~$k$; with this choice of~$\xi$ for a given~$k$, 
    the overall depth~$T$ can be chosen as~$\tilde{\mathcal O}( k \log(L k / \epsilon) )$.
    This is why it is a ``shallow'' unitary design.
}
Note that the overall circuit~$U$ is nothing but a brickwork circuit 
with some $\U(q^2)$ gates \emph{deleted} along time-like columns separated by~$2\xi$.
See~\cref{fig:ShallowDesign}.

Due to the deletion,
the rank~$r$ of the reduced density matrix~$\sigma$ on any interval of length~$\ell$
after applying~$U$ to a product pure state 
is bounded as
\begin{equation}
    r \le q^{4\xi} \ll q^\ell \, .
\end{equation}
(Note that $\norm{\sigma - q^{-\ell} \one_{q^\ell}}_1 > \frac 1 2$ since $r \le \frac 1 2 q^\ell$.)
Then, we follow the same line of reasoning as in~\cref{sec:ComplexityGrowth} for a complexity lower-bound.
In particular, \cref{eq:AvgFidelityKBound} becomes
\begin{equation}
    \avg_\sigma F(\rho,\sigma)^{2k} \le (1+\epsilon) r^k q^{-k \ell} k!
    \le 2 \left( k  q^{-(1-4c)\ell} \right)^k . \label{eq:shallowUnitaryFidelity}
\end{equation}
Setting~$k = T / \poly(c \ell)$, we conclude that typical~$\sigma$ has complexity
\begin{equation}
    \mathfrak C(\sigma) = \Omega(T / \poly(c \ell,q)) \text{ for } T \le T_{max} = q^{\mathcal O (\ell)}. \label{eq:compLBShallowDesign}
\end{equation}

So the rank of $\sigma$ is small, and its complexity can be large. These results contrast with the case where $U$ is Haar random --- in that case $\sigma$ would be close to maximally mixed, with low complexity. How are these opposite conclusions compatible with the fact that the supergate circuit generates an approximate $k$-design? To temporarily sharpen the puzzle, note that for the Haar random case, these facts can be deduced by computing the purity, which is only quadratic in the density operator:
\begin{align}
   P(U) = \Tr[ (\Tr_{L-\ell} U 0^{\otimes L} U^\dagger)^2 ]
    &= 
    \Tr[
        \SWAP_{A,A'} (U 0^{\otimes L} U^\dagger) \otimes (U 0^{\otimes L} U^\dagger)
    ]\, , \label{eq:swapexp}\\
    \avg_{U \sim \U(AB)} ~P(U) 
    &= \frac{d_A + d_B}{d_A d_B + 1} 
    \le \frac 1 {d_A}\left( 1 + \frac{2d_A}{d_B} \right)  \nonumber
\end{align}
where $d_A = q^\ell$ and $d_B = q^{L-\ell}$.
Converting this into a bound on the trace distance,
we have
\begin{equation}
    \norm*{\rho_A - \frac{\one_A}{d_A}}_1 ^2 \le d_A \norm*{\rho_A - \frac{\one_A}{d_A} }_2^2 
    = P(U) \, d_A - 1 \le \frac{2 d_A}{d_B} = 2 q^{-(L - 2 \ell)} \, .
\end{equation}
Since this calculation only involves $(U \otimes U^\dagger)^{\otimes 2}$,
it suffices for $U$ to be an exact unitary ($k=2$)-design.

However, with $\epsilon$-approximate unitary designs, the error parameter~$\epsilon$ 
becomes very important.
Recall that in 
the defining inequality for the relative error~$\epsilon$ in \cref{eq:approxDesignEps},
the observable~$M$ is required to be positive semi-definite,
but $\SWAP_{A,A'}$ has negative eigenvalues.
One can decompose $\SWAP$ into positive and negative parts,
use \cref{eq:approxDesignEps},
and combine the results,
but this gives a completely meaningless result unless $\epsilon$ is exponentially small in~$\ell$.
(In view of~\eqref{eq:compLBShallowDesign}, 
such a small $\epsilon$ will require $c = \xi / \ell$ to be ``big.'')
In fact, a bound on trace distance with a constant error~$\epsilon$ for an approximate unitary design
must \emph{not} imply that $\rho_A$ is close to the maximally mixed state
since we have given a lower bound on the complexity using a special approximate unitary design.
This is a drastic instance of a sign problem.
In the brickwork random circuit, 
we have used separate calculation (\cref{thm:InfinitePurityBound}) of purity,
which is not related to the unitary $2$-design property.

Note that the random circuits of~\cite{LaRacuente2024,Schuster2024,Cui2025}
do not define a simple Markovian random walk on the unitary group.
That is,
if we have a time sequence of unitaries $U(T)$ of depth~$T = 1,2,3,\ldots$,
constructed by the prescription of those references,
then $U(T+1)U(T)^\dagger$ is not a quantum circuit of constant depth.
Hence, it seems absurd to use these constructions to model time dynamics.
If we demand the simple Markovian property,
{\it i.e.}, 
the distribution of $U(T+1)$ must be the convolution 
of some distribution of constant depth quantum circuits
and the distribution of $U(T)$,
then, perhaps, the complexity of any smaller-than-half subsystem 
would stop growing at time~$\Theta(\ell)$.

\section{Random matrix integral analysis of random projectors}\label{app:rmt}
Let $\pi_i$ be projection operators of rank $r$ in a Hilbert space of dimension $d$, and let $\sigma_i = \pi_i/r$ be the corresponding density matrices. How many $\sigma_i$ can one pick in such a way that all pairs are approximately orthogonal? Here, the measure of orthogonality we will use is the fidelity:
\begin{equation}
    F(\sigma_i,\sigma_j) \equiv \tr \left(\sqrt{\sqrt{\sigma_i}\sigma_j\sqrt{\sigma_i}}\right) = \frac{1}{r}\tr(\sqrt{\pi_i\pi_j\pi_i}).
\end{equation}
The requirement of approximate pairwise orthogonality is
\begin{equation}\label{rootfidbound}
    F(\sigma_i,\sigma_j) < \epsilon \hspace{20pt} \text{ for all} \hspace{20pt} i \neq j,
\end{equation}
where $\epsilon$ is a small tolerance parameter that is fixed in the limit of large $r,d$.

As a strategy to generate the projectors, let's pick randomly
\begin{equation}
    \pi_i = U_i \cdot \text{diag}(\underbrace{1,\dots 1}_{r},\underbrace{0,\dots 0}_{d-r})\cdot U_i^\dagger
\end{equation}
where $U_i$ are independent Haar random $d\times d$ unitaries. As we will show below, for $r \ll d$, independently-drawn projectors of this type will typically have fidelity $F \approx \frac{8}{3\pi} \sqrt{r/d}\ll 1$. We will also calculate the probability $\Pr(\epsilon)$ that the fidelity is anomalously high, of order $\epsilon$. If one tries to form an ensemble with $N$ pairwise-orthogonal states, then as long as $\Pr(\epsilon)N < 1$, it will be possible to add another state (by picking a new state randomly, and if necessary repeating), so it will be possible to construct an ensemble with up to $N \sim 1/\Pr(\epsilon)$ states.

To compute $\Pr(\epsilon)$ we can use results from \cite{Collins2004} for the probability distribution for the $r$ nonzero eigenvalues of 
\begin{equation}
    \pi_1\pi_2\pi_1.
\end{equation}
This distribution is (theorem 2.2 of \cite{Collins2004})
\begin{equation}\label{collins}
    \Pr(\{\lambda_1,\dots,\lambda_d\}) \propto \prod_{i=1}^r (1-\lambda_i)^{d-2r}\prod_{1\le i<j\le r}(\lambda_i-\lambda_j)^2, \hspace{20pt} 0\le\lambda_i \le 1.
\end{equation}
To compute the fidelity, we would like to estimate this probability subject to a constraint on the sum of the square roots of the eigenvalues.

In the limit of large $r,d$ with fixed $w\equiv r/d < 1/2$, the discrete set of eigenvalues can be approximated as a smooth density $\rho(\lambda)$. The probability can be expressed in terms of this density
\begin{align}
   \log \Pr[\rho(\lambda)] = d^2 \bigg\{\int \mathrm{d}\lambda \mathrm{d}\lambda' \log|\lambda-\lambda'| + (\tfrac{1}{w}-2)\int \mathrm{d}\lambda \rho(\lambda)\log(1-\lambda)\notag\\
   + \alpha_1 \left[1-\int \mathrm{d}\lambda \rho(\lambda)\right]+ \alpha_2\left[F-\int \mathrm{d}\lambda \rho(\lambda)\sqrt{\lambda}\right]\bigg\}.\label{logP}
\end{align}
The first line is the translation of (\ref{collins}), and the second line are Lagrange multipliers enforce the constraint that the distribution should be normalized and have fidelity equal to $F$. This is a standard Hermitian matrix integral problem with a constraint $\lambda > 0$ and potential
\begin{align}
    V(\lambda) = -(\tfrac{1}{w}-2)\log(1-\lambda) + \alpha_1 + \alpha_2\sqrt{\lambda}.
\end{align}
In principle, to compute the probability of a given fidelity, one should find the saddle point $\rho(\lambda)$ as a function of $\alpha_2$, adjust $\alpha_2$ to give the desired fidelity, and then evaluate (\ref{logP}). This will give 
\begin{equation}\label{mystery}
    \log \Pr(F) = -r^2\cdot (\text{some function of }\frac{r}{d}, F).
\end{equation}
In practice, one can use shortcuts to compute the mean and variance of this distribution, which will be enough for our purposes.

The mean can be computed from the saddle point density with no constraint on the fidelity, $\alpha_2 = 0$. This is
\begin{equation}
    \rho(\lambda) = \frac{1}{2\pi w}\frac{\sqrt{a-\lambda}}{\sqrt{\lambda}(1-\lambda)}, \hspace{20pt} 0\le \lambda \le a, \hspace{20pt} a = 4w(1-w). 
\end{equation}
which gives
\begin{align}
    \mathbb{E}(F) &= \int \mathrm{d}\lambda \rho(\lambda) \sqrt{\lambda} \\ &= \frac{2}{\pi}\left(\sqrt{\frac{1-w}{w}} - \frac{1-2w}{w}\text{arcsin}(\sqrt{w})\right)\\ &\approx \frac{8}{3\pi}\sqrt{\frac{r}{d}},\label{fidexpform}
\end{align}
where the last line is an approximate expression for $r\ll d$.

To compute the variance for large $r$, one can use the general formula for the connected correlation function of resolvents, see e.g.~(4.32) of \cite{Stanford:2019vob}, with $a_- = 0, \beta = 2,a_+ = 4w(1-w)$. By expanding that formula for large $x_1,x_2$ and studying the term of order $x_1^{-(1+n)}x_2^{-(1+n)}$, one finds 
\begin{equation}
\mathbb{E}\left\{\Big(\tr\left((\pi_1\pi_2\pi_1)^n\right)\Big)^2\right\}-\left(\mathbb{E}\Big\{\tr\left((\pi_1\pi_2\pi_1)^n\right)\Big\}\right)^2 = \frac{\Gamma(n-\frac{1}{2})^2}{4\pi\Gamma(n)\Gamma(n-1)}a^{2n}.
\end{equation}
The variance of the fidelity is $1/r^2$ times the continuation to $n = 1/2$ of this expression, which is
\begin{equation}
\mathbb{E}(F^2) - \mathbb{E}(F)^2 = \frac{2w(1-w)}{\pi^2 r^2}.
\end{equation}
So we have computed enough of the mystery function in (\ref{mystery}) to know
\begin{equation}\label{fidvarform}
    \log \Pr(F) = -r^2 \left[\frac{\pi^2}{4\frac{r}{d}(1-\frac{r}{d})}\big(F-\mathbb{E}(F)\big)^2  + \dots\right]
\end{equation}
where the dots are of order $(F-\mathbb{E}(F))^3$ and higher.

\bibliographystyle{alphaurl}
\bibliography{comp-refs}

\newcommand{\etalchar}[1]{$^{#1}$}
\begin{thebibliography}{FHJKM25}

\bibitem[AF04]{Alicki2003}
R.~Alicki and M.~Fannes.
\newblock Continuity of quantum conditional information.
\newblock {\em J. Phys. A: Mth. Gen.}, 37(5):L55--L57, January 2004.
\newblock \href {https://arxiv.org/abs/quant-ph/0312081} {\path{arXiv:quant-ph/0312081}}, \href {https://doi.org/10.1088/0305-4470/37/5/l01} {\path{doi:10.1088/0305-4470/37/5/l01}}.

\bibitem[AHS19]{Agon:2018zso}
Cesar~A. Ag{\'o}n, Matthew Headrick, and Brian Swingle.
\newblock {Subsystem Complexity and Holography}.
\newblock {\em JHEP}, 02:145, 2019.
\newblock \href {https://arxiv.org/abs/1804.01561} {\path{arXiv:1804.01561}}, \href {https://doi.org/10.1007/JHEP02(2019)145} {\path{doi:10.1007/JHEP02(2019)145}}.

\bibitem[AJ11]{Albash:2010mv}
Tameem Albash and Clifford~V. Johnson.
\newblock {Evolution of Holographic Entanglement Entropy after Thermal and Electromagnetic Quenches}.
\newblock {\em New J. Phys.}, 13:045017, 2011.
\newblock \href {https://arxiv.org/abs/1008.3027} {\path{arXiv:1008.3027}}, \href {https://doi.org/10.1088/1367-2630/13/4/045017} {\path{doi:10.1088/1367-2630/13/4/045017}}.

\bibitem[Ali15]{Alishahiha:2015rta}
Mohsen Alishahiha.
\newblock {Holographic Complexity}.
\newblock {\em Phys. Rev. D}, 92(12):126009, 2015.
\newblock \href {https://arxiv.org/abs/1509.06614} {\path{arXiv:1509.06614}}, \href {https://doi.org/10.1103/PhysRevD.92.126009} {\path{doi:10.1103/PhysRevD.92.126009}}.

\bibitem[Aud07]{Audenaert2006}
Koenraad M.~R. Audenaert.
\newblock A sharp fannes-type inequality for the von neumann entropy.
\newblock {\em J. Phys. A}, 40(28):8127--8136, June 2007.
\newblock \href {https://arxiv.org/abs/quant-ph/0610146} {\path{arXiv:quant-ph/0610146}}, \href {https://doi.org/10.1088/1751-8113/40/28/s18} {\path{doi:10.1088/1751-8113/40/28/s18}}.

\bibitem[BBdB{\etalchar{+}}11]{Balasubramanian:2010ce}
V.~Balasubramanian, A.~Bernamonti, J.~de~Boer, N.~Copland, B.~Craps, E.~Keski-Vakkuri, B.~Muller, A.~Schafer, M.~Shigemori, and W.~Staessens.
\newblock {Thermalization of Strongly Coupled Field Theories}.
\newblock {\em Phys. Rev. Lett.}, 106:191601, 2011.
\newblock \href {https://arxiv.org/abs/1012.4753} {\path{arXiv:1012.4753}}, \href {https://doi.org/10.1103/PhysRevLett.106.191601} {\path{doi:10.1103/PhysRevLett.106.191601}}.

\bibitem[BCHJ{\etalchar{+}}21]{Brandao2019}
Fernando G. S.~L. Brandão, Wissam Chemissany, Nicholas Hunter-Jones, Richard Kueng, and John Preskill.
\newblock Models of quantum complexity growth.
\newblock {\em PRX Quantum}, 2(3):030316, July 2021.
\newblock \href {https://arxiv.org/abs/1912.04297} {\path{arXiv:1912.04297}}, \href {https://doi.org/10.1103/prxquantum.2.030316} {\path{doi:10.1103/prxquantum.2.030316}}.

\bibitem[BHH16]{Brandao:2012zoj}
Fernando G. S.~L. Brand{\~a}o, Aram~W. Harrow, and Micha{\l} Horodecki.
\newblock {Local Random Quantum Circuits are Approximate Polynomial-Designs}.
\newblock {\em Commun. Math. Phys.}, 346(2):397--434, 2016.
\newblock \href {https://arxiv.org/abs/1208.0692} {\path{arXiv:1208.0692}}, \href {https://doi.org/10.1007/s00220-016-2706-8} {\path{doi:10.1007/s00220-016-2706-8}}.

\bibitem[BRS{\etalchar{+}}16a]{Brown:2015lvg}
Adam~R. Brown, Daniel~A. Roberts, Leonard Susskind, Brian Swingle, and Ying Zhao.
\newblock {Complexity, action, and black holes}.
\newblock {\em Phys. Rev. D}, 93(8):086006, 2016.
\newblock \href {https://arxiv.org/abs/1512.04993} {\path{arXiv:1512.04993}}, \href {https://doi.org/10.1103/PhysRevD.93.086006} {\path{doi:10.1103/PhysRevD.93.086006}}.

\bibitem[BRS{\etalchar{+}}16b]{Brown:2015bva}
Adam~R. Brown, Daniel~A. Roberts, Leonard Susskind, Brian Swingle, and Ying Zhao.
\newblock {Holographic Complexity Equals Bulk Action?}
\newblock {\em Phys. Rev. Lett.}, 116(19):191301, 2016.
\newblock \href {https://arxiv.org/abs/1509.07876} {\path{arXiv:1509.07876}}, \href {https://doi.org/10.1103/PhysRevLett.116.191301} {\path{doi:10.1103/PhysRevLett.116.191301}}.

\bibitem[BS18]{Brown:2017jil}
Adam~R. Brown and Leonard Susskind.
\newblock {Second law of quantum complexity}.
\newblock {\em Phys. Rev. D}, 97(8):086015, 2018.
\newblock \href {https://arxiv.org/abs/1701.01107} {\path{arXiv:1701.01107}}, \href {https://doi.org/10.1103/PhysRevD.97.086015} {\path{doi:10.1103/PhysRevD.97.086015}}.

\bibitem[CC05]{Calabrese:2005in}
Pasquale Calabrese and John~L. Cardy.
\newblock {Evolution of entanglement entropy in one-dimensional systems}.
\newblock {\em J. Stat. Mech.}, 0504:P04010, 2005.
\newblock \href {https://arxiv.org/abs/cond-mat/0503393} {\path{arXiv:cond-mat/0503393}}, \href {https://doi.org/10.1088/1742-5468/2005/04/P04010} {\path{doi:10.1088/1742-5468/2005/04/P04010}}.

\bibitem[CHH{\etalchar{+}}24]{Chen2024}
Chi-Fang Chen, Jeongwan Haah, Jonas Haferkamp, Yunchao Liu, Tony Metger, and Xinyu Tan.
\newblock Incompressibility and spectral gaps of random circuits.
\newblock June 2024.
\newblock \href {https://arxiv.org/abs/2406.07478} {\path{arXiv:2406.07478}}, \href {https://doi.org/10.48550/ARXIV.2406.07478} {\path{doi:10.48550/ARXIV.2406.07478}}.

\bibitem[CLY{\etalchar{+}}18]{Chen:2018mcc}
Bin Chen, Wen-Ming Li, Run-Qiu Yang, Cheng-Yong Zhang, and Shao-Jun Zhang.
\newblock {Holographic subregion complexity under a thermal quench}.
\newblock {\em JHEP}, 07:034, 2018.
\newblock \href {https://arxiv.org/abs/1803.06680} {\path{arXiv:1803.06680}}, \href {https://doi.org/10.1007/JHEP07(2018)034} {\path{doi:10.1007/JHEP07(2018)034}}.

\bibitem[CMR17]{Carmi:2016wjl}
Dean Carmi, Robert~C. Myers, and Pratik Rath.
\newblock {Comments on Holographic Complexity}.
\newblock {\em JHEP}, 03:118, 2017.
\newblock \href {https://arxiv.org/abs/1612.00433} {\path{arXiv:1612.00433}}, \href {https://doi.org/10.1007/JHEP03(2017)118} {\path{doi:10.1007/JHEP03(2017)118}}.

\bibitem[Col05]{Collins2004}
Benoit Collins.
\newblock Product of random projections, {Jacobi} ensembles and universality problems arising from free probability.
\newblock {\em Probab. Theory Related Fields}, 133(3):315--344, March 2005.
\newblock \href {https://arxiv.org/abs/math/0406560} {\path{arXiv:math/0406560}}, \href {https://doi.org/10.1007/s00440-005-0428-5} {\path{doi:10.1007/s00440-005-0428-5}}.

\bibitem[CSBH25]{Cui2025}
Laura Cui, Thomas Schuster, Fernando Brandao, and Hsin-Yuan Huang.
\newblock Unitary designs in nearly optimal depth.
\newblock July 2025.
\newblock \href {https://arxiv.org/abs/2507.06216} {\path{arXiv:2507.06216}}, \href {https://doi.org/10.48550/ARXIV.2507.06216} {\path{doi:10.48550/ARXIV.2507.06216}}.

\bibitem[FHJKM25]{Fan2025}
Yale Fan, Nicholas Hunter-Jones, Andreas Karch, and Shivan Mittal.
\newblock Sharp transitions for subsystem complexity.
\newblock October 2025.
\newblock \href {https://arxiv.org/abs/2510.18832} {\path{arXiv:2510.18832}}, \href {https://doi.org/10.48550/ARXIV.2510.18832} {\path{doi:10.48550/ARXIV.2510.18832}}.

\bibitem[FLM13]{Faulkner:2013ana}
Thomas Faulkner, Aitor Lewkowycz, and Juan Maldacena.
\newblock {Quantum corrections to holographic entanglement entropy}.
\newblock {\em JHEP}, 11:074, 2013.
\newblock \href {https://arxiv.org/abs/1307.2892} {\path{arXiv:1307.2892}}, \href {https://doi.org/10.1007/JHEP11(2013)074} {\path{doi:10.1007/JHEP11(2013)074}}.

\bibitem[Haf23]{Haferkamp:2023cem}
Jonas Haferkamp.
\newblock {On the moments of random quantum circuits and robust quantum complexity}.
\newblock 3 2023.
\newblock \href {https://arxiv.org/abs/2303.16944} {\path{arXiv:2303.16944}}.

\bibitem[HHJ{\etalchar{+}}17]{Haah2015}
Jeongwan Haah, Aram~W. Harrow, Zhengfeng Ji, Xiaodi Wu, and Nengkun Yu.
\newblock Sample-optimal tomography of quantum states.
\newblock {\em IEEE Transactions on Information Theory}, 63(9):5628--5641, August 2017.
\newblock \href {https://arxiv.org/abs/1508.01797} {\path{arXiv:1508.01797}}, \href {https://doi.org/10.1109/tit.2017.2719044} {\path{doi:10.1109/tit.2017.2719044}}.

\bibitem[HLSW04]{Hayden2003}
Patrick Hayden, Debbie Leung, Peter~W. Shor, and Andreas Winter.
\newblock Randomizing quantum states: Constructions and applications.
\newblock {\em Commun. Math. Phys.}, 250(2):371--391, July 2004.
\newblock \href {https://arxiv.org/abs/quant-ph/0307104} {\path{arXiv:quant-ph/0307104}}, \href {https://doi.org/10.1007/s00220-004-1087-6} {\path{doi:10.1007/s00220-004-1087-6}}.

\bibitem[HLW06]{Hayden2004}
Patrick Hayden, Debbie~W. Leung, and Andreas Winter.
\newblock Aspects of generic entanglement.
\newblock {\em Commun. Math. Phys.}, 265(1):95--117, March 2006.
\newblock \href {https://arxiv.org/abs/quant-ph/0407049} {\path{arXiv:quant-ph/0407049}}, \href {https://doi.org/10.1007/s00220-006-1535-6} {\path{doi:10.1007/s00220-006-1535-6}}.

\bibitem[HM13]{Hartman:2013qma}
Thomas Hartman and Juan Maldacena.
\newblock {Time Evolution of Entanglement Entropy from Black Hole Interiors}.
\newblock {\em JHEP}, 05:014, 2013.
\newblock \href {https://arxiv.org/abs/1303.1080} {\path{arXiv:1303.1080}}, \href {https://doi.org/10.1007/JHEP05(2013)014} {\path{doi:10.1007/JHEP05(2013)014}}.

\bibitem[HNQ{\etalchar{+}}16]{Hayden:2016cfa}
Patrick Hayden, Sepehr Nezami, Xiao-Liang Qi, Nathaniel Thomas, Michael Walter, and Zhao Yang.
\newblock {Holographic duality from random tensor networks}.
\newblock {\em JHEP}, 11:009, 2016.
\newblock \href {https://arxiv.org/abs/1601.01694} {\path{arXiv:1601.01694}}, \href {https://doi.org/10.1007/JHEP11(2016)009} {\path{doi:10.1007/JHEP11(2016)009}}.

\bibitem[HP20]{Hayden:2017xed}
Patrick Hayden and Geoffrey Penington.
\newblock {Approximate Quantum Error Correction Revisited: Introducing the Alpha-Bit}.
\newblock {\em Commun. Math. Phys.}, 374(2):369--432, 2020.
\newblock \href {https://arxiv.org/abs/1706.09434} {\path{arXiv:1706.09434}}, \href {https://doi.org/10.1007/s00220-020-03689-1} {\path{doi:10.1007/s00220-020-03689-1}}.

\bibitem[HRT07]{Hubeny:2007xt}
Veronika~E. Hubeny, Mukund Rangamani, and Tadashi Takayanagi.
\newblock {A Covariant holographic entanglement entropy proposal}.
\newblock {\em JHEP}, 07:062, 2007.
\newblock \href {https://arxiv.org/abs/0705.0016} {\path{arXiv:0705.0016}}, \href {https://doi.org/10.1088/1126-6708/2007/07/062} {\path{doi:10.1088/1126-6708/2007/07/062}}.

\bibitem[HW12]{Hayden_2012}
Patrick Hayden and Andreas Winter.
\newblock Weak decoupling duality and quantum identification.
\newblock {\em IEEE Transactions on Information Theory}, 58(7):4914–4929, July 2012.
\newblock URL: \url{http://dx.doi.org/10.1109/TIT.2012.2191695}, \href {https://doi.org/10.1109/tit.2012.2191695} {\path{doi:10.1109/tit.2012.2191695}}.

\bibitem[JBS23]{Jian:2022pvj}
Shao-Kai Jian, Gregory Bentsen, and Brian Swingle.
\newblock {Linear growth of circuit complexity from Brownian dynamics}.
\newblock {\em JHEP}, 08:190, 2023.
\newblock \href {https://arxiv.org/abs/2206.14205} {\path{arXiv:2206.14205}}, \href {https://doi.org/10.1007/JHEP08(2023)190} {\path{doi:10.1007/JHEP08(2023)190}}.

\bibitem[JLMS16]{Jafferis:2015del}
Daniel~L. Jafferis, Aitor Lewkowycz, Juan Maldacena, and S.~Josephine Suh.
\newblock {Relative entropy equals bulk relative entropy}.
\newblock {\em JHEP}, 06:004, 2016.
\newblock \href {https://arxiv.org/abs/1512.06431} {\path{arXiv:1512.06431}}, \href {https://doi.org/10.1007/JHEP06(2016)004} {\path{doi:10.1007/JHEP06(2016)004}}.

\bibitem[Li22]{Li:2022hkh}
Zhi Li.
\newblock {Short Proofs of Linear Growth of Quantum Circuit Complexity}.
\newblock 5 2022.
\newblock \href {https://arxiv.org/abs/2205.05668} {\path{arXiv:2205.05668}}.

\bibitem[Lin19]{Lin:2018cbk}
Henry~W. Lin.
\newblock {Cayley graphs and complexity geometry}.
\newblock {\em JHEP}, 02:063, 2019.
\newblock \href {https://arxiv.org/abs/1808.06620} {\path{arXiv:1808.06620}}, \href {https://doi.org/10.1007/JHEP02(2019)063} {\path{doi:10.1007/JHEP02(2019)063}}.

\bibitem[LL24]{LaRacuente2024}
Nicholas LaRacuente and Felix Leditzky.
\newblock Approximate unitary $k$-designs from shallow, low-communication circuits.
\newblock July 2024.
\newblock \href {https://arxiv.org/abs/2407.07876} {\path{arXiv:2407.07876}}, \href {https://doi.org/10.48550/ARXIV.2407.07876} {\path{doi:10.48550/ARXIV.2407.07876}}.

\bibitem[LS14]{Liu:2013iza}
Hong Liu and S.~Josephine Suh.
\newblock {Entanglement Tsunami: Universal Scaling in Holographic Thermalization}.
\newblock {\em Phys. Rev. Lett.}, 112:011601, 2014.
\newblock \href {https://arxiv.org/abs/1305.7244} {\path{arXiv:1305.7244}}, \href {https://doi.org/10.1103/PhysRevLett.112.011601} {\path{doi:10.1103/PhysRevLett.112.011601}}.

\bibitem[NVH18]{Nahum2017}
Adam Nahum, Sagar Vijay, and Jeongwan Haah.
\newblock Operator spreading in random unitary circuits.
\newblock {\em Phys. Rev. X}, 8(2):021014, April 2018.
\newblock \href {https://arxiv.org/abs/1705.08975} {\path{arXiv:1705.08975}}, \href {https://doi.org/10.1103/physrevx.8.021014} {\path{doi:10.1103/physrevx.8.021014}}.

\bibitem[OKHHJ24]{Oszmaniec:2022srs}
Micha{\l} Oszmaniec, Marcin Kotowski, Micha{\l} Horodecki, and Nicholas Hunter-Jones.
\newblock {Saturation and Recurrence of Quantum Complexity in Random Local Quantum Dynamics}.
\newblock {\em Phys. Rev. X}, 14(4):041068, 2024.
\newblock \href {https://arxiv.org/abs/2205.09734} {\path{arXiv:2205.09734}}, \href {https://doi.org/10.1103/PhysRevX.14.041068} {\path{doi:10.1103/PhysRevX.14.041068}}.

\bibitem[PSSY22]{Penington:2019kki}
Geoff Penington, Stephen~H. Shenker, Douglas Stanford, and Zhenbin Yang.
\newblock {Replica wormholes and the black hole interior}.
\newblock {\em JHEP}, 03:205, 2022.
\newblock \href {https://arxiv.org/abs/1911.11977} {\path{arXiv:1911.11977}}, \href {https://doi.org/10.1007/JHEP03(2022)205} {\path{doi:10.1007/JHEP03(2022)205}}.

\bibitem[RY17]{Roberts:2016hpo}
Daniel~A. Roberts and Beni Yoshida.
\newblock {Chaos and complexity by design}.
\newblock {\em JHEP}, 04:121, 2017.
\newblock \href {https://arxiv.org/abs/1610.04903} {\path{arXiv:1610.04903}}, \href {https://doi.org/10.1007/JHEP04(2017)121} {\path{doi:10.1007/JHEP04(2017)121}}.

\bibitem[SHH24]{Schuster2024}
Thomas Schuster, Jonas Haferkamp, and Hsin-Yuan Huang.
\newblock Random unitaries in extremely low depth.
\newblock 2024.
\newblock \href {https://arxiv.org/abs/2407.07754} {\path{arXiv:2407.07754}}, \href {https://doi.org/10.48550/ARXIV.2407.07754} {\path{doi:10.48550/ARXIV.2407.07754}}.

\bibitem[SS14]{Stanford:2014jda}
Douglas Stanford and Leonard Susskind.
\newblock {Complexity and Shock Wave Geometries}.
\newblock {\em Phys. Rev. D}, 90(12):126007, 2014.
\newblock \href {https://arxiv.org/abs/1406.2678} {\path{arXiv:1406.2678}}, \href {https://doi.org/10.1103/PhysRevD.90.126007} {\path{doi:10.1103/PhysRevD.90.126007}}.

\bibitem[Sus16]{Susskind:2014rva}
Leonard Susskind.
\newblock {Computational Complexity and Black Hole Horizons}.
\newblock {\em Fortsch. Phys.}, 64:24--43, 2016.
\newblock [Addendum: Fortsch.Phys. 64, 44--48 (2016)].
\newblock \href {https://arxiv.org/abs/1403.5695} {\path{arXiv:1403.5695}}, \href {https://doi.org/10.1002/prop.201500092} {\path{doi:10.1002/prop.201500092}}.

\bibitem[SW20]{Stanford:2019vob}
Douglas Stanford and Edward Witten.
\newblock {JT gravity and the ensembles of random matrix theory}.
\newblock {\em Adv. Theor. Math. Phys.}, 24(6):1475--1680, 2020.
\newblock \href {https://arxiv.org/abs/1907.03363} {\path{arXiv:1907.03363}}, \href {https://doi.org/10.4310/ATMP.2020.v24.n6.a4} {\path{doi:10.4310/ATMP.2020.v24.n6.a4}}.

\bibitem[TVW24]{Tang:2024puq}
Haifeng Tang, Shreya Vardhan, and Jinzhao Wang.
\newblock {Estimating time in quantum chaotic systems and black holes}.
\newblock 12 2024.
\newblock \href {https://arxiv.org/abs/2412.19885} {\path{arXiv:2412.19885}}.

\bibitem[Win05]{winter2005quantumclassicalmessageidentification}
Andreas Winter.
\newblock Quantum and classical message identification via quantum channels, 2005.
\newblock URL: \url{https://arxiv.org/abs/quant-ph/0401060}, \href {https://arxiv.org/abs/quant-ph/0401060} {\path{arXiv:quant-ph/0401060}}.

\end{thebibliography}


\end{document}